\newcommand{\ket}[1]{{\left\vert{#1}\right\rangle}}
\newcommand{\qw}[1][-1]{\ar @{-} [0,#1]}
\newcommand{\multigate}[2]{*+<1em,.9em>{\hphantom{#2}} \POS [0,0]="i",[0,0].[#1,0]="e",!C *{#2},"e"+UR;"e"+UL **\dir{-};"e"+DL **\dir{-};"e"+DR **\dir{-};"e"+UR **\dir{-},"i" \qw}
\newcommand{\ghost}[1]{*+<1em,.9em>{\hphantom{#1}} \qw}
\newcommand{\rstick}[1]{*!L!<-.5em,0em>=<0em>{#1}}
\newcommand{\lstick}[1]{*!R!<.5em,0em>=<0em>{#1}}
\newcommand{\Qcircuit}{\xymatrix @*=<0em>}
\newcommand{\zo}{\{0,1\}}
\newcommand{\good}{{\sf good}}
\newcommand{\bad}{{\sf bad}}
\newcommand{\AlgQone}{{\sf Alg\text{-}ExpQ1}}
\newcommand{\AlgQtwo}{{\sf Alg\text{-}PolyQ2}}
\newcommand{\SimQone}{{\sf SimQ1}}
\newcommand{\bigO}[1]{ \mathcal{O} \left(#1 \right)}
\newcommand{\bigOt}[1]{ \widetilde{\mathcal{O}} \left(#1 \right)}
\newcommand{\FX}{\mathrm{FX}}
\tikzstyle{block}=[draw,minimum size=2em]
\tikzstyle{l}=[minimum size=1.5em]
\tikzstyle{xor}=[draw, minimum size=0.8em,append after command={[shorten >=\pgflinewidth, shorten <=\pgflinewidth,] 
\tikzstyle{sponge}=[rectangle, rounded corners=.25cm, minimum width=.5cm, minimum height=1.8cm, draw]
\newcommand{\fullversion}[1]{#1}
\newcommand{\acversion}[1]{}
\title{Quantum Attacks without Superposition Queries: the Offline Simon's Algorithm}
\author{Xavier Bonnetain\inst{1,3} \and Akinori Hosoyamada\inst{2,4} \and Mar\'ia Naya-Plasencia\inst{1} \and Yu Sasaki\inst{2} \and Andr\'e Schrottenloher\inst{1}}
\institute{Inria, France\\
  \texttt{\{xavier.bonnetain,maria.naya\_plasencia,andre.schrottenloher\}@inria.fr}
  \and NTT Secure Platform Laboratories, Tokyo, Japan\\
  \texttt{\{hosoyamada.akinori,sasaki.yu\}@lab.ntt.co.jp}
  \and Sorbonne Universit\'e, Coll\`ege Doctoral, F-75005 Paris, France
  \and Nagoya University, Nagoya, Japan}
\begin{document}
\maketitle
\renewcommand{\labelitemi}{$\bullet$}
\setcounter{footnote}{0}

\begin{abstract}
In symmetric cryptanalysis, the model of superposition queries has led to surprising results, with many constructions being broken in polynomial time thanks to Simon's period-finding algorithm. But the practical implications of these attacks remain blurry. In contrast, the results obtained so far for a quantum adversary making classical queries only are less impressive.

In this paper, we introduce a new quantum algorithm which uses Simon's subroutines in a novel way. We manage to leverage the algebraic structure of cryptosystems in the context of a quantum attacker limited to classical queries and offline quantum computations. We obtain improved quantum-time/classical-data tradeoffs with respect to the current literature, while using only as much hardware requirements (quantum and classical) as a standard exhaustive search with Grover's algorithm. In particular, we are able to break the Even-Mansour construction in quantum time $\tilde{O}(2^{n/3})$, with $O(2^{n/3})$ classical queries and $O(n^2)$ qubits only. In addition, we improve some previous superposition attacks by reducing the 
data complexity from exponential to polynomial, with the same time complexity.

Our approach can be seen in two complementary ways: \emph{reusing} superposition queries during the iteration of a search using Grover's algorithm, or alternatively, removing the memory requirement in some quantum attacks based on a collision search, thanks to their algebraic structure.

We provide a list of cryptographic applications, including the Even-Mansour construction, the FX construction, some Sponge authenticated modes of encryption, and many more.

\end{abstract}

\keywords{Simon's algorithm, classical queries, symmetric cryptography, quantum cryptanalysis, Even-Mansour construction, FX construction.}

\section{Introduction}

Ever since Shor~\cite{DBLP:conf/focs/Shor94} introduced his celebrated quantum polynomial-time algorithm for solving factorization and Discrete Logarithms, both problems believed to be classically intractable, post-quantum cryptography has become a subject of wide interest. Indeed, the security of classical cryptosystems relies on computational assumptions, which until recently, were made with respect to classical adversaries; if quantum adversaries are to be taken into account, the landscape of security is bound to change dramatically. 

While it is difficult to assert the precise power of quantum computers, which are yet to come, it is still possible to study quantum algorithms for cryptographic problems, and to estimate the computational cost of solving these problems for a quantum adversary. 
The ongoing project by NIST~\cite{nistcall} for post-quantum \emph{asymmetric} schemes aims to replace the current mostly used ones by new standards.

In \emph{symmetric} cryptography, the impact of quantum computing seems, at first sight, much more limited. 
This is because the security of most of symmetric-key schemes is not predicated on structured problems. Symmetric-key schemes are required to be computed extremely efficiently, and designers must avoid such computationally expensive operations.
Grover's quantum search algorithm~\cite{DBLP:conf/stoc/Grover96}, another cornerstone of quantum computing, speeds up by a quadratic factor exhaustive search procedures. This has led to the common saying that ``doubling the key sizes'' should ensure a similar level of post-quantum security.

However, the actual post-quantum security of symmetric-key schemes requires more delicate treatment. Recovering the secret key via exhaustive search is only one of all the possible approaches.
The report of the National Academy of Sciences on the advent of quantum computing~\cite{NAP25196} also states that ``it is possible that there is some currently unknown clever quantum attack'' that would perform much better than Grover's algorithm. Indeed, cryptographers are making significant progress on quantum attackers with \emph{superposition queries}, which break many symmetric-key schemes in polynomial time.

\paragraph{Quantum Generic Attacks in Q1 and Q2 Models.}
Quantum attacks can be mainly classified into two types~\cite{DBLP:phd/dnb/Gagliardoni17,DBLP:journals/tosc/KaplanLLN16,DBLP:conf/ctrsa/HosoyamadaS18}, Q1 model and Q2 model, assuming different abilities for the attacker. In the Q1 model, attackers have an access to a quantum computer to perform any offline computation, while they are only allowed to make online queries in a classical manner. In the Q2 model, besides the offline quantum computation, attackers are allowed to make superposition queries to a quantum cryptographic oracle. Here, we briefly review previous results in these models to introduce the context of our results.

The Q2 model is particularly interesting as it yields some attacks with a very low cost. 
Kuwakado and Morii~\cite{DBLP:conf/isit/KuwakadoM10,DBLP:conf/isita/KuwakadoM12} showed that the Even-Mansour cipher and the three-round Feistel networks, classically proven secure if their underlying building blocks are ideal, were broken in polynomial time. This exponential speedup, the first concerning symmetric cryptography, was obtained thanks to Simon's algorithm~\cite{DBLP:conf/focs/Simon94} for recovering a Boolean hidden shift.
Later on, more results have been obtained in this setting, with more generic constructions broken~\cite{DBLP:conf/crypto/KaplanLLN16,DBLP:conf/asiacrypt/Leander017}, and an exponential acceleration of \emph{slide attacks}, which target ciphers with a self-similar structure. Versions of these attacks~\cite{DBLP:conf/asiacrypt/BonnetainN18} for constructions with modular additions use Kuperberg's algorithm~\cite{DBLP:journals/siamcomp/Kuperberg05}, allowing a better than quadratic speed-up. 
All these attacks, however, run in the model of \emph{superposition queries}, which models a quantum adversary having some inherently quantum access to the primitives attacked. As such, they do not give any improvement when the adversary only has classical access.

Stated differently, the attacks in the Q1 model are particularly relevant due to their impact on current data communication technology. 
However, the quantum algorithms that have been exploited for building attacks in the Q1 model are very limited and have not allowed more than a quadratic speed-up. The most used algortihm is the simple quantum exhaustive search with Grover's algorithm. A possible direction is the collision finding algorithm that is often said to achieve ``$2^{n/3}$ complexity" versus $2^{n/2}$ classically. However, even in this direction, there are several debatable points; basic quantum algorithms for finding collisions have massive quantum hardware requirements~\cite{DBLP:conf/latin/BrassardHT98}. There is a quantum-hardware-friendly variant~\cite{DBLP:conf/asiacrypt/ChaillouxNS17}, but then the time complexity becomes suboptimal. 

In summary, attacks using Simon's algorithm could achieve a very low complexity but could only be applied in the Q2 model, a very strong model.
In contrast, attacks in the Q1 model are practically more relevant, but for now the obtained speed-ups were not surprising.

Another model to consider when designing quantum attacks is whether the attacker has or not a big amount of quantum memory available.
Small quantum computers seem like the most plausible scenario, and therefore attacks needing a polynomial amount of qubits are more practically relevant.
Therefore, the most \emph{realistic} scenario is Q1 with small quantum memory.

\paragraph{Our Main Contribution.}
The breakthrough we present in this paper is the first application of Simon's algorithm~\cite{DBLP:conf/focs/Simon94} in the Q1 model, which requires significantly less than $\bigO{2^{n/2}}$ classical queries and offline quantum computations, only with $\mathsf{poly}(n)$ qubits, and no qRAM access (where $n$ is the size of the secret).
Namely, we remove the superposition queries in previous attacks. The new idea can be applied to a long list of ciphers and modes of operation. Let us illustrate the impact of our attacks by focusing on two applications:

The first application is the key recovery on the Even-Mansour construction, which is one of the simplest attacks using Simon's algorithm.
Besides the polynomial time attacks in the Q2 model, Kuwakado and Morii also developed an attack in the Q1 model with $\bigO{2^{n/3}}$ classical queries, quantum computations, qubits, and classical memory~\cite{DBLP:conf/isita/KuwakadoM12}.
The extension of this Q1 attack by Hosoyamada and Sasaki~\cite{DBLP:conf/ctrsa/HosoyamadaS18} recovers the key with $\bigO{2^{3n/7}}$ classical queries, $\bigO{2^{3n/7}}$ quantum computations, polynomially many qubits and $\bigO{2^{n/7}}$ classical memory (to balance classical queries and quantum computations). Our attack in the Q1 model only uses polynomially many qubits, yet only requires $\bigO{2^{n/3}}$ classical queries, $\bigO{n^3 2^{n/3}}$ quantum computations and $\mathsf{poly}(n)$ classical memory.

The second application is the key recovery on the FX-construction $\mathrm{FX}_{k,k_{in},k_{out}}$, which computes a ciphertext $c$ from a plaintext $p$ by $c \leftarrow E_{k}(p \oplus k_{in}) \oplus k_{out}$, where $E$ is a block cipher, $k$ is an $m$-bit key and $k_{in},k_{out}$ are two $n$-bit keys. Leander and May proposed an attack in the Q2 model with $\bigO{n2^{m/2}}$ superposition queries, $\bigO{n^3 2^{m/2}}$ quantum computations, $\mathsf{poly}(n)$ qubits and $\mathsf{poly}(n)$ classical memory \cite{DBLP:conf/asiacrypt/Leander017}.
\footnote{
Here we are assuming that $m$ is in $\bigO{n}$, which is the case for usual block ciphers.
}
They combined Simon's algorithm and Grover's algorithm in a clever way, while it became inevitable to make queries in an adaptive manner. For the Q1 model, the meet-in-the-middle attack \cite{DBLP:conf/ctrsa/HosoyamadaS18} can recover the key with $\bigO{2^{3(m+n)/7}}$ complexities.
Our results can improve the previous attacks in two directions. One is to reduce the amount of superposition queries in the Q2 model to the polynomial order and convert the adaptive attack to a non-adaptive one. The other is to completely remove the superposition queries.
The comparison of previous quantum attacks and our attacks on Even-Mansour and the FX construction is shown in Table~\ref{Tbl:comparison}. Other interesting complexity trade-offs are possible, as shown in detail in sections~\ref{sec:q1} and~\ref{sec:q2}.
\begin{table}[!htb]
\begin{center}
\caption{Previous and New Quantum Attacks on Even-Mansour and FX, assuming that $m = \bigO{n}$.} \label{Tbl:comparison}
\begin{tabular}{@{}c|@{\ }cccccc@{}} 
\toprule
Target & {Model} & Queries & Time & Q-memory & C-memory & Reference \\ 
\midrule
             & Q2 &  $\bigO{n}$ &  $\bigO{n^{3}}$ &  $\bigO{n}$   & $\bigO{n^2}$ & \cite{DBLP:conf/isita/KuwakadoM12} \\
             & Q1 &  $\bigO{2^{n/3}}$ &  $\bigO{2^{n/3}}$ &  $\bigO{2^{n/3}}$   & $\bigO{2^{n/3}}$ & \cite{DBLP:conf/isita/KuwakadoM12} \\
EM & Q1 & $\bigO{2^{3n/7}}$  & $\bigO{2^{3n/7}}$   & $\bigO{n}$    & $\bigO{2^{n/7}}$  & \cite{DBLP:conf/ctrsa/HosoyamadaS18} \\
             & Q1  & $\bigO{2^{n/3}}$  & $\bigO{n^3 2^{n/3}}$  &  $\bigO{n^2}$  & $\bigO{n}$ & Section~\ref{sec:q1}  \\ 
\midrule
             & Q2  & $\bigO{n2^{m/2}}$  &  $\bigO{n^{3}2^{m/2}}$   & $\bigO{n^2}$  & 0 & \cite{DBLP:conf/asiacrypt/Leander017} \\
  FX          & Q2  & $\bigO{n}$ & $\bigO{n^{3} 2^{m/2}}$  & $\bigO{n^2}$ & $\bigO{n}$ & Section~\ref{sec:q2}   \\
             & Q1  & $\bigO{2^{3(m+n)/7}}$  & $\bigO{2^{3(m+n)/7}}$ &  $\bigO{n}$   & $\bigO{2^{(m+n)/7}}$   & \cite{DBLP:conf/ctrsa/HosoyamadaS18} \\
              & Q1  & $\bigO{2^{(m+n)/3}}$  & $\bigO{n^3 2^{(m+n)/3}}$  & $\bigO{n^2}$  & $\bigO{n}$ & Section~\ref{sec:q1}   \\ 
\bottomrule
\end{tabular}
\end{center}
\end{table}

\paragraph{Our New Observation.}
Here we describe our new algorithm used in the Q1 model with the Even-Mansour construction as an example.
Recall that the encryption $E_{k_1,k_2}$ of the Even-Mansour construction is defined as $E_{k_1,k_2}(x) = P(x \oplus k_1) \oplus k_2$, where $P$ is a public permutation and $k_1,k_2 \in \{0,1\}^n$ are the secret keys.
Roughly speaking, our attack guesses $(2n/3)$-bit of $k_1$ (denoted by $k^{(2)}_1$ in Fig.~\ref{fig:Q1EMIntro}) by using the Grover search, and checks if the guess is correct by applying Simon's algorithm to the remaining $(n/3)$-bit of $k_1$ (denoted by $k^{(1)}_1$ in Fig.~\ref{fig:Q1EMIntro}).
If we were in the Q2 model, we could recover $k_1$ by using the technique by Leander and May~\cite{DBLP:conf/asiacrypt/Leander017} in time $\tilde{\mathcal{O}}(2^{n/3})$.
However, their technique is not applicable in the Q1 setting since quantum queries are required.

Our core observation that realizes the above idea in the Q1 model is that, we can judge whether a function $f \oplus g$ has a period (i.e., we can apply Simon's algorithm) without any quantum query to $g$, if we have the quantum state $\ket{\psi_g} := \left( \sum_{x} \ket{x}\ket{g(x)} \right)^{\otimes cn}$ ($c$ is a small constant):
If we have the quantum state $\ket{\psi_g}$, then we can make the quantum state $\ket{\psi_{f\oplus g}} := \left( \sum_{x} \ket{x}\ket{(f \oplus g)(x)} \right)^{\otimes cn}$ by making $\bigO{n}$ quantum queries to $f$.
Once we obtain $\ket{\psi_{f \oplus g}}$, by applying the Hadamard operation $H^{\otimes n}$ to each $\ket{x}$ register, we obtain the quantum state
$$\left( \sum_{x_1,u_1} (-1)^{u_1 \cdot x_1} \ket{u_{1}}\ket{(f \oplus g)(x_{1})} \right) \otimes \cdots \otimes  \left( \sum_{x_{cn},u_{cn}} (-1)^{u_{cn} \cdot x_{cn}} \ket{u_{cn}}\ket{(f \oplus g)(x_{cn})} \right)$$
Then, roughly speaking, $\mathrm{dim}(\mathrm{Span}(u_1,\dots,u_{cn})) < n$ always holds if $f \oplus g$ has a secret period $s$, while $\mathrm{dim}(\mathrm{Span}(u_1,\dots,u_{cn})) = n$ holds with a high probability if $f \oplus g$ does not have any period.
Since the dimension of the vector space can be computed in time $\bigO{n^3}$, we can judge if $f \oplus g$ has a period in time $\bigO{n^3}$.
Note that we can reconstruct the quantum data $\ket{\psi_g}$ after judging whether $(f \oplus g)$ has a period (with some errors) by appropriately performing uncomputations, which help us use these procedures as a subroutine without measurement in other quantum algorithms.

For the Even-Mansour construction, we set $g : \{0,1\}^{n/3} \rightarrow \{0,1\}^n$ by $g(x) := E_{k_1,k_2}(x \| 0^{2n/3})$.
Then we can make the quantum state $\ket{\psi_g}$ by classically querying $x$ to $g$ for \emph{all} $x \in \{0,1\}^{n/3}$, which requires $2^{n/3}$ classical queries.
After obtaining the state $\ket{\psi_g}$, we guess $k^{(2)}_1$.
Suppose that here our guess is $k' \in \{0,1\}^{2n/3}$.
We define $f_{k'} : \{0,1\}^{n/3} \rightarrow \{0,1\}^n$ by $f_{k'}(x) := P(x \| k')$. 
Then, roughly speaking, our guess is correct if and only if the function $f_{k'} \oplus g$ has a period $k^{(1)}_1$.
Thus we can judge whether the guess is correct without quantum queries to $g$, by using our technique described above.
Since $k^{(2)}_1$ can be guessed in time $\tilde{\mathcal{O}}(2^{n/3})$ by using the Grover search, we can recover the keys by making $\mathcal{O}(2^{n/3})$ classical queries and $\tilde{\mathcal{O}}(2^{n/3})$ offline quantum computations.
\usetikzlibrary{arrows}
\newcommand{\midsection}{\tikz \draw[-] (-0.1,-0.1) -- (0.1,0.1);}

\begin{figure}
\centering
\begin{tikzpicture}[node distance=2em]
  \draw
  node at (0,0)[block,name=p,minimum size=3em] {$P$}
  node at (p.west) [name=pleft] {}
  node [name=pup, above of=pleft, node distance=1em] {}
  node [xor, name=xorup,circle,left of=pup,node distance=3em] {}
  node [l, above of=xorup,name=k12, above] {$k_1^{(2)}$}
  node [name=pdown, below of=pleft, node distance=1em] {}
  node [xor, name=xordown, circle, left of=pdown, node distance=3em] {}
  node [l, below of=xordown, name=k11,below] {$k_1^{(1)}$}
  node [left of=xorup, node distance = 7em,name=startup] {}
  node [left of=xordown, node distance = 7em,name=startdown] {};
  \draw[->] (xorup) -- (pup.center);
  \draw[->] (xordown) -- (pdown.center);
  \draw[->] (k12) -- (xorup);
  \draw[->] (k11) -- (xordown);
  \draw (startup) -- node {\midsection} node[above,node distance=1.5em] {$\frac{2n}{3}$} (xorup);
  \draw (startdown) -- node {\midsection} node[below,node distance=2em] {$\frac{n}{3}$} (xordown);
  \draw
  node [xor, circle,name=xorend, right of=p, node distance=3em] {}
  node [l, above of=xorend,name=k2,above] {$k_2$}
  node [right of=xorend, node distance=7em, name=end] {};
  \draw (p) -- (xorend);
  \draw[->] (k2) -- (xorend);
  \draw[->] (xorend) -- node {\midsection} node[name=n,above, node distance=1.5em] {$n$} (end);
  
  \draw 
  node [block, dotted,name=grover, above of=n, node distance=3.5em,right=-2em] {Grover search space}
  node [block, dotted,name=simon, below of=n, node distance=4.5em] {Apply Simon's algorithm};
  \draw[dotted, ->] (grover.west|-k12) -- (k12);
  \draw[dotted,->] (simon.west|-k11) -- (k11);
  
 \end{tikzpicture}
\caption{Idea of our Q1 attack on the Even-Mansour construction.}
\label{fig:Q1EMIntro}
\end{figure}
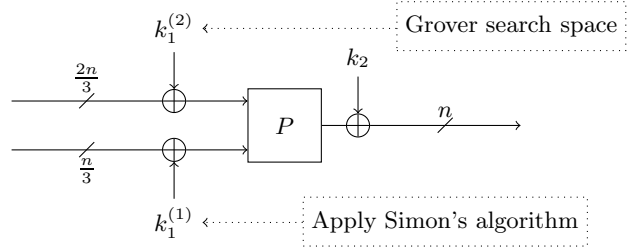

We will show how we can similarly attack the FX construction in the Q1 model, by guessing additional key bits (see Fig.~\ref{fig:Q1FXIntro}).

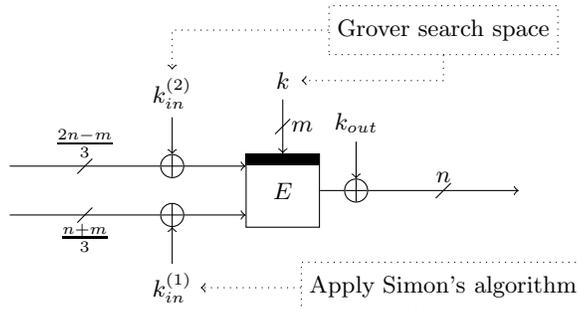
\begin{figure}[htb]
\centering
\begin{tikzpicture}[node distance=2em]
  \draw
  node at (0,0)[block,name=p,minimum size=3em] {$E$}
  node at (p.west) [name=pleft] {}
  node [name=pup, above of=pleft, node distance=1em] {}
  node [xor, name=xorup,circle,left of=pup,node distance=3em] {}
  node [l, above of=xorup,name=k12,above] {$k_{in}^{(2)}$}
  node [name=pdown, below of=pleft, node distance=1em] {}
  node [xor, name=xordown, circle, left of=pdown, node distance=3em] {}
  node [l, below of=xordown, name=k11, below] {$k_{in}^{(1)}$}
  node [left of=xorup, node distance = 7em,name=startup] {}
  node [left of=xordown, node distance = 7em,name=startdown] {};
  \draw[->] (xorup) -- (pup.center);
  \draw[->] (k12) -- (xorup);
  \draw[->] (xordown) -- (pdown.center);
  \draw[->] (k11) -- (xordown);
  \draw (startup) -- node {\midsection} node[above,node distance=1.5em] {$\frac{2n-m}{3}$} (xorup);
  \draw (startdown) -- node {\midsection} node[below,node distance=2em] {$\frac{n+m}{3}$} (xordown);
  \draw
  node [xor, circle,name=xorend, right of=p, node distance=3em] {}
  node [l, above of=xorend,name=k2,above] {$k_{out}$}
  node [right of=xorend, node distance=7em, name=end] {};
  \draw[->] (p) -- (xorend)-- node {\midsection} node[name=n,above, node distance=1.5em] {$n$} (end);
  \draw[->] (k2) -- (xorend);
  \draw 
  node [block, dotted,name=grover, above of=n, node distance=6em] {Grover search space}
  node [block, dotted,name=simon, below of=n, node distance=4.5em] {Apply Simon's algorithm};

  \draw[dotted,->] (simon.west|-k11) -- (k11);
  \draw[fill=black,draw opacity=0] (p.north west) rectangle ([yshift=-0.15cm]p.north east);
  \draw  
  node [l,name=k,above of=p, node distance=4.5em] {$k$};
  \draw[->] (k) -- node {\midsection} node [name=m,right,node distance=1.5em] {$m$} (p);
  \draw  node [name=tmp,above of=k] {};
  \draw[dotted, ->] (grover) -| (k12);
  \draw[dotted, ->] (grover) |- (k);
 \end{tikzpicture}
\caption{Idea of our Q1 attack on the FX construction.}
\label{fig:Q1FXIntro}
\end{figure}

Moreover, our attack idea in the Q1 model can also be used to reduce the number of quantum queries of attacks in the Q2 model.
The Leander and May's attack on the FX construction in the Q2 model~\cite{DBLP:conf/asiacrypt/Leander017} guesses the $m$-bit key $k$ of the FX construction $\mathrm{FX}_{k,k_{in},k_{out}}$ and checks whether the guess is correct by using Simon's algorithm, which requires $\mathcal{O}(2^{m/2})$ online quantum queries and $\tilde{\mathcal{O}}(2^{m/2})$ offline quantum computations.
Roughly speaking, the guess $k'$ for the key $k$ is correct if and only if $(f_{k'} \oplus g)(x)$ has the secret period $k_{in}$, where $f_{k'}(x) = E_{k'}(x)$ and $g(x) = \mathrm{FX}_{k,k_{in},k_{out}}(x)$.
In the Q2 model, we can make the quantum state $\ket{\psi_g} = \left(\sum_x \ket{x}\ket{g(x)}\right)^{\otimes cn}$ by making $\bigO{n}$ quantum queries to $g$.
Thus, by our new attack idea described above, we can break the FX construction with $\bigO{n}$ online quantum queries and $\tilde{\mathcal{O}}(2^{m/2})$ offline quantum computations, which exponentially improves the attack by Leander and May from the viewpoint of \emph{quantum query} complexity.

This exponential improvement on the quantum query complexity is due to the separation of offline queries and online computations:
In the previous attack on the FX construction in the Q2 model by Leander and May, we have to do online queries and offline computations alternately in each iteration of the Grover search.
Thus the number of online quantum queries becomes exponential in the previous attack. 
On the other hand, in our new attack, the online queries (i.e., the procedures to make the quantum state $\ket{\psi_g}$) are completely separated from offline computations.
This enables us to decrease the number of quantum queries exponentially, while we still need exponentially many offline computations.

\subsubsection{Paper organization.}
Section~\ref{sec:prelim} gives preliminaries.
Section~\ref{sec:algorithm} describes our main algorithms.
Section~\ref{sec:q2} shows applications of our algorithms in the Q2 model.
Section~\ref{sec:q1} shows applications of our algorithms in the Q1 model.
Section~\ref{sec:discussion} discusses further applications of our algorithm.
Section~\ref{sec:conclusion} concludes the paper.

\section{Preliminaries}
\label{sec:prelim}

In this section, we introduce some quantum computing notions and review Simon's and Grover's algorithms. We refer to~\cite{nielsen2002quantum} for a broader presentation.

\subsection{The Quantum Circuit Model}

It has become standard in the cryptographic literature to write quantum algorithms in the circuit model, which is universal for quantum computing. We only consider the logical level of quantum circuits, with logical qubits, not their implementation level (which requires physical qubits, quantum error-correction, \emph{etc}). Although it is difficult to estimate the cost of a physical implementation which does not yet exist, we can compare security levels as quantum operation counts in this model. For example, Grover search of the secret key for AES-128 is known to require approximately $2^{64}$ quantum evaluations of the cipher, and $2^{84}$ quantum operations~\cite{DBLP:conf/pqcrypto/GrasslLRS16}.

\paragraph{Qubits and Operations.}
A quantum circuit represents a sequence of quantum operations, denoted as \emph{quantum gates}, applied to a set of \emph{qubits}. An individual qubit is a quantum object whose state is an element of a two-dimensional Hilbert space, with basis $\ket{0}, \ket{1}$ (analogs of the classical logical $0$ and $1$). Hence, the state is described as a linear combination of $\ket{0}, \ket{1}$ with complex coefficients (a \emph{superposition}). We add to this a normalization condition: $\alpha \ket{0} + \beta \ket{1}$ is such that $|\alpha|^2 + |\beta|^2 = 1$. When it is clear from context, we dismiss common normalization factors.

When $n$ qubits are given, the computational basis has $2^n$ vectors, which are all $n$-bit strings. The qubits start in a state $\ket{0}$, for example a fixed spin or polarization. The sequence of quantum gates that is applied modifies the superposition, thanks to constructive and destructive interferences. In the end, we measure the system, and obtain some $n$-bit vector in the computational basis, which we expect to hold a meaningful result.

All computations are (linear) unitary operators of the Hilbert space, and as such, are reversible (this holds for the individual gates, but also for the whole circuit). In general, any classical computation can be made reversible (and so, implemented as a quantum circuit) provided that one uses sufficiently many \emph{ancilla qubits} (which start in the state $\ket{0}$ and are brought back to $\ket{0}$ after the computation). Generally, on input $\ket{x}$, we can perform some computation, copy the result to an output register using CNOT gates, and uncompute (perform backwards the same operations) to restore the initial state of the ancilla qubits. Uncomputing a unitary $U$ corresponds to applying its adjoint operator $U^*$.

By the principle of \emph{deferred measurements}, any measure that occurs inside the quantum circuit can be deferred to the end of the computation.

\paragraph{Quantum Oracles.}
Many quantum algorithms require an oracle access. The difference they make with classical algorithms with this respect is that classical oracles (\emph{e.g.} cryptographic oracles such as a cipher with unknown key) are queried ``classically'', with a single value, while quantum oracles are unitary operators. We consider oracle calls of the type:
\[
\Qcircuit @C=1.0em @R=.7em {
\lstick{\ket{x}} & \qw & \multigate{1}{O_f} & \qw & \rstick{\ket{x}} \\
\lstick{\ket{y}} & \qw & \ghost{O_f}         & \qw & \rstick{\ket{y \oplus f(x)}}
}
\]
which XOR their output value to an output register (ensuring reversibility). If we consider that $\ket{y}$ starts in the state $\ket{0}$, then $f(x)$ is simply written here. If the function $f$ can be accessed through $O_f$, we say it has superposition oracle access.

\paragraph{Quantum RAM.}
Additionally to the use of ``plain'' quantum circuits with universal quantum computation, many algorithms require quantum random-access, or being able to access at runtime a \emph{superposition} of memory cells. This is a strong requirement,
since this requires an extensive quantum hardware (the qRAM) and a huge architecture that is harder to build than a quantum circuit with a limited number of qubits.
Shor's algorithm, Simon's algorithm, Grover's algorithm do not require qRAM, if their oracle calls do not either, contrary to, \emph{e.g.}, the algorithm for quantum collision search of~\cite{DBLP:conf/latin/BrassardHT98}, whose optimal speedup can be realized only by using massive qRAM. 

Our algorithm has no such requirement, which puts it on the same level of practicality as Grover's algorithm for attacking symmetric primitives.

\subsection{Simon's Algorithm}

Simon's algorithm~\cite{DBLP:conf/focs/Simon94} gives an exponential speedup on the following problem.

\begin{problem}[Simon's problem]
Suppose given access to a function $f~: \zo^n \rightarrow \zo^n$ that is either injective, or such that there exists $s \in \zo^n$ with:
$$ \forall x, f(x) = f(y) \iff y = x \text{ or } y = x \oplus s, $$
then find $\alpha$.
\end{problem}

In other words, the function $f$ has a hidden Boolean period. It is also easy to extend this algorithm to a hidden Boolean shift, when we want to decide whether two functions $f$ and $g$ are such that $g(x) = f(x \oplus s)$ for all $x$. In practice, $f$ can fall in any set $X$ provided that it can be represented efficiently, but in our examples, we will consider functions producing bit strings.

Solving this problem with classical oracle access to $f$ requires $\Omega \left(2^{n/2} \right)$ queries, as we need to find a collision of $f$ (or none, if there is no hidden period). Simon~\cite{DBLP:conf/focs/Simon94} gives an algorithm which only requires $\bigO{n}$ superposition queries. We fix $c \geq 1$ a small constant to ensure a good success probability and repeat $c n$ times Algorithm~\ref{algorithm:simonproc}.

\begin{algorithm}[h!]
\begin{algorithmic}[1]
\State Start in the all-zero state $\ket{0} \ket{0}$ where the first register contains $n$ qubits and the second represents elements of $X$.
\State Apply Hadamard gates to obtain:
$$ \sum_{x \in \zo^n} \ket{x} \ket{0} $$
\State Query $O_f$ to obtain:
$$ \sum_{x \in \zo^n} \ket{x} \ket{f(x)} = \sum_{a \in X} \left( \sum_{x \in \zo^n | f(x) = a} \ket{x} \right) \ket{a}  $$
\State Measure $a$ (alternatively, we can defer this measurement), get a random value $a \in X$ and:
$$ \sum_{x \in \zo^n | f(x) = a} \ket{x} $$
\State Apply Hadamard gates:
$$ \sum_{y \in \zo^n} \left( \sum_{x \in \zo^n | f(x) = a} (-1)^{x \cdot y} \right) \ket{y} $$
\State Now measure the $y$ register. There are two cases.
\begin{itemize}
\item Either $f$ hides no period $s$, in which case we get a random $y$.
\item Either $f$ hides a period $s$, in which case the amplitude of $\ket{y}$ is: 
$$\sum_{x \in \zo^n | f(x) = a} (-1)^{x \cdot y} = (-1)^{x_0 \cdot y} + (-1)^{(x_0 \oplus s) \cdot y}$$
which is zero if $y \cdot s = 1$ and non-zero otherwise.
\begin{itemize}
\item In that case, measuring gives a random $y$ such that $y \cdot s = 0$.
\end{itemize}
\end{itemize}
\end{algorithmic}

\caption{Quantum subroutine of Simon's algorithm.}
\label{algorithm:simonproc}
\end{algorithm}

We obtain either:
\begin{itemize}
\item a list of $cn$ random values of $y$;
\item a list of $cn$ random values of $y$ in the hyperplane $y \cdot s = 0$.
\end{itemize}
It becomes now easy to test whether $s$ exists or not. If it doesn't, the system of equations obtained has full rank. If it does exist, we can find it by solving the system. 
Judging whether there exists such an $s$ and actually finding it (if it exists) can be done in time $\bigO{n^3}$ by Gaussian elimination.

\paragraph{Simon's Algorithm in Cryptography.}
This algorithm has been used in many attacks on modes of operation and constructions where recovering a secret requires to find a hidden shift between two functions having bit-string inputs. Generally, the functions to which Simon's algorithm is applied are not injective, and random collisions can occur. But a quick analysis (as done \emph{e.g.} in~\cite{DBLP:conf/crypto/KaplanLLN16}) shows that even in this case, a mild increase of the constant $c$ will increase the success probability to a sufficient level.
To be precise, the following proposition holds.
\begin{proposition}[Theorem 2 in \cite{DBLP:conf/crypto/KaplanLLN16}]\label{prop:SimonGen}
Suppose that $f : \{0,1\}^n \rightarrow X$ has a period $s \neq 0^n$, i.e., $f(x\oplus s) = f(x)$ for all $x \in \{0,1\}^n$, and satisfies
\begin{equation}
\max_{t \neq \{s,0^n\}} \Pr_x\left[ f(x \oplus t) = f(x) \right] \leq \frac{1}{2}.
\end{equation}
When we apply Simon's algorithm to $f$, it returns $s$ with a probability at least $1 - 2^n \cdot (3/4)^{cn}$.
\end{proposition}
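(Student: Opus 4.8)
The plan is to treat the $cn$ independent runs of Algorithm~\ref{algorithm:simonproc} as producing $cn$ linear constraints on the hidden period, and to show that with overwhelming probability these constraints determine $s$ uniquely. Write $V_a = \{x : f(x) = a\}$ for the fibre of $f$ over $a$. Tracing through a single run: after the fibre measurement the state collapses to $\frac{1}{\sqrt{|V_a|}} \sum_{x \in V_a} \ket{x}$ with probability $|V_a|/2^n$, and the final Hadamard transform gives the readout probability
\[
\Pr[y] = \frac{1}{2^{2n}} \sum_a \left| \sum_{x \in V_a} (-1)^{x \cdot y} \right|^2 .
\]
The single computation that drives the whole argument is the Walsh bias $\sum_y (-1)^{y \cdot t} \Pr[y]$ for a fixed $t$. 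Expanding the square as $\sum_{x,x' \in V_a} (-1)^{(x \oplus x') \cdot y}$ and summing over $y$ first collapses everything onto the diagonal $x \oplus x' = t$, yielding
\[
\sum_y (-1)^{y \cdot t} \Pr[y] = \frac{1}{2^n} \left| \{ x : f(x) = f(x \oplus t) \} \right| = \Pr_x[f(x \oplus t) = f(x)] =: p(t),
\]
so that $\Pr[y \cdot t = 0] = \tfrac12 \bigl(1 + p(t)\bigr)$.

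Both halves of the statement follow from this identity. Taking $t = s$ and using the periodicity hypothesis $f(x \oplus s) = f(x)$ gives $p(s) = 1$, hence $\Pr[y \cdot s = 0] = 1$: every measured vector is orthogonal to $s$, which is exactly why the true period is always consistent with the collected system. For every $t \notin \{0^n, s\}$ the hypothesis $p(t) \le 1/2$ gives $\Pr[y \cdot t = 0] \le 3/4$. Now observe that the algorithm fails to return $s$ precisely when the $cn$ vectors $y_1, \dots, y_{cn}$ fail to span the hyperplane $s^\perp$; since each $y_i$ already lies in the $(n-1)$-dimensional space $s^\perp$ and $s \in \operatorname{span}(y_1, \dots, y_{cn})^\perp$ always, spanning fails iff there exists some $t \notin \{0^n, s\}$ with $y_i \cdot t = 0$ for all $i$. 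Using independence of the $cn$ runs and a union bound over the at most $2^n - 2$ such $t$,
\[
\Pr[\text{fail}] \le (2^n - 2)\,(3/4)^{cn} < 2^n (3/4)^{cn},
\]
and in the complementary event $\operatorname{span}(y_1, \dots, y_{cn})^\perp = \{0^n, s\}$, so Gaussian elimination returns the unique nonzero solution $s$.

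I expect the Walsh-bias computation to be the main obstacle, since it is the only place where the non-injectivity of $f$ and the collision bound genuinely enter; once the clean formula $\Pr[y \cdot t = 0] = \tfrac12(1 + p(t))$ is in hand, everything else is linear algebra over $\mathbb{F}_2$ together with an elementary union bound. The secondary point that warrants care is the equivalence \emph{``failure $\iff$ some $t \notin \{0^n, s\}$ is orthogonal to all $y_i$''}, which relies on the fact that every $y_i$ lies in $s^\perp$ so that the orthogonal complement of their span is forced to be exactly the line $\{0^n, s\}$ on success.
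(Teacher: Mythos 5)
Your proposal is correct and follows essentially the same route as the cited Theorem~2 of Kaplan et al.\ (and as this paper's own Appendix argument for Lemma~1): the key identity $\Pr[y\cdot t=0]=\tfrac12\bigl(1+\Pr_x[f(x\oplus t)=f(x)]\bigr)$ is exactly the ``Claim'' invoked there, and the union bound over the at most $2^n-2$ spurious candidates $t\notin\{0^n,s\}$, each surviving all $cn$ independent runs with probability at most $(3/4)^{cn}$, is the same counting step. Nothing further is needed.
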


\subsection{Grover's Algorithm}
Grover's algorithm~\cite{DBLP:conf/stoc/Grover96} allows a quadratic speedup on classical exhaustive search. Precisely, it solves the following problem:

\begin{problem}[Grover's problem]
Consider a set $X$ (the ``search space'') whose elements are represented on $\lceil \log_2(|X|) \rceil$ qubits, such that the uniform superposition $\sum_{x \in X} \ket{x}$ is computable in $\bigO{1}$ time. Given oracle access to a function $f~: X \rightarrow \zo$ (the ``test''), find $x \in X $ such that $f(x) = 1$.
\end{problem}

Classically, if there are $2^t$ preimages of $1$, we expect one to be found in time (and oracle accesses to $f$) $\bigO{|X|/2^t}$. Quantumly, Grover's algorithm finds one in time (and oracle accesses to $O_f$) $\bigOt{ \sqrt{|X| / 2^t}}$. In particular, if there is one preimage of 1, the running time is $\bigOt{ \sqrt{|X|} }$. If the superposition oracle for $f$ uses $a$ ancilla qubits, then Grover's algorithm requires $a + \lceil \log_2(|X|) \rceil$ qubits only.

Grover's algorithm works first by producing the superposition $\sum_{x \in |X|} \ket{x}$. It applies $\bigOt{ \sqrt{|X| / 2^t}}$ times an operator which, by querying $O_f$ ``moves'' some amplitude towards the preimages of $1$.

\section{Simon's Algorithm with Asymmetric Queries}\label{sec:algorithm}
In this section, we introduce a problem that can be seen as a general combination of Simon's and Grover's problems, and that will be solved by an according combination of algorithmic ideas.
The problem has many cryptographic applications, and it will be at the core of our improved Q2 and Q1 time-memory-data tradeoffs.

\begin{problem}[Asymmetric Search of a Period]\label{problem:search-shift}
Let $F~: \zo^m \times \zo^n \rightarrow \zo^\ell$ and $g~:\zo^n \rightarrow \zo^\ell$ be two functions. We consider $F$ as a family of functions indexed by $\zo^m$ and write $F(i,\cdot) = f_i(\cdot)$. 
Assume that we are given quantum oracle access to $F$, and classical or quantum oracle access to $g$.
(In the Q1 setting, $g$ will be a classical oracle. In the Q2 setting, $g$ will be a quantum oracle.)

Assume that there exists exactly one $i \in \zo^m$ such that $f_{i} \oplus g$ has a hidden period, \emph{i.e.}: $\forall x \in \zo^n, f_{i_0}(x) \oplus g(x) = f_{i_0}(x \oplus s) \oplus g(x \oplus s)$ for some $s$. Furthermore, assume that:
\begin{equation}\label{eq:simoncondition}
\max_{ \substack{i \in \{0,1\}^m \setminus \{i_0\} \\ t \in \{0,1\}^{n} \setminus \{0^{n}\}}} \Pr_{ x \gets \{0,1\}^{n}}\left[ (f_i \oplus g)(x \oplus t)= (f_i \oplus g)(x) \right] \leq \frac{1}{2}
\end{equation}
Then find $i_0$ and $s$.
\end{problem} 

In our cryptographic applications, $g$ will be a keyed function such that adversaries have to make online queries to evaluate it, while $F$ will be a function such that adversaries can evaluate it offline.
For example, the problem of recovering keys of the FX construction $\FX_{k,k_{in},k_{out}}(x) = E_k(x \oplus k_{in}) \oplus k_{out}$ can be regarded as a simple cryptographic instantiation of Problem~\ref{problem:search-shift}:
Set $g(x) := \FX_{k,k_{in},k_{out}}(x)$ and $F(i,x) := E_{i}(x)$.
Then, roughly speaking, the function $f_i \oplus g$ has a period $k_{in}$ if $k=i$, whereas it does not have any period if $i \neq k$ and Condition \eqref{eq:simoncondition} holds. 
Thus we can know whether $i=k$ by checking whether $f_i \oplus g$ has a period.

\paragraph{Justification of Condition~\eqref{eq:simoncondition}.}

We added Condition \eqref{eq:simoncondition} in Problem 3 because the problem would be much harder to solve if we do not suppose any condition on $f_i$. Such assumptions are standard in the litterature of quantum attacks using Simon's algorithm (see for example 
\cite[Sections 2.2 and 4]{DBLP:conf/crypto/KaplanLLN16} or \cite[Section 3]{journals/iacr/Bonnetain17}). This is reasonable for cryptographic applications, as a block cipher is expected to behave like a random permutation, which makes the functions we construct
in our applications behave like random functions. This assumption is made in \cite{DBLP:conf/crypto/KaplanLLN16,DBLP:conf/asiacrypt/Leander017}, and such functions satisfy Condition \eqref{eq:simoncondition} with an overwhelming probability.
Moreover, as remarked in \cite{DBLP:conf/crypto/KaplanLLN16}, a cryptographic construction that fails to satisfy Condition \eqref{eq:simoncondition} would exhibit some poor differential properties which 
could be used for cryptanalysis.

\subsection{Existing Techniques to Solve the Problem}\label{sec:ExistingAlg}
Here we explain existing algorithms to solve Problem~\ref{problem:search-shift} in both the Q1 model and the Q2 model, with the algorithms to recover keys of the FX construction as an example.
Note that we consider the situation in which exponentially many qubits are \emph{not} available.
\paragraph{The model Q1.}
In the Q1 model, when we are allowed to make only classical queries to $g := \FX_{k,k_{in},k_{out}}$, there exists a Q1 algorithm to attack the FX construction that uses a kind of meet-in-the-middle technique~\cite{DBLP:conf/ctrsa/HosoyamadaS18}.
However, it does not make use of the exponential speed-up of Simon's algorithm, and its time complexity and query complexity is $\bigO{2^{3(n+m)/7}}$ (for $m \leq 4n/3$).

\paragraph{The model Q2.}
Problem~\ref{problem:search-shift} can be solved with $ \bigO{n 2^{m/2}} $ superposition queries to $F(i,x)=E_i(x)$ and $g(x)=\FX_{k,k_{in},k_{out}}(x)$, and in time $ \bigO{n^3 2^{m/2}} $, using the Grover-meet-Simon algorithm of~\cite{DBLP:conf/asiacrypt/Leander017}. Indeed, we make a Grover search on index $i \in \zo^m$. When testing whether a guess $i$ for the key $k$ is correct, we perform $\bigO{n}$ queries to $F$ and $\bigO{n}$ queries to $g$, to check whether $f_i \oplus g$ has a hidden period, hence whether the guess $i$ is correct. Moreover, since superposition access to $F$ and $g$ is allowed, we can test $i$ in superposition as well.

\subsection{An Algorithm for Asymmetric Search of a Shift}

Here we describe our new algorithms to solve Problem~\ref{problem:search-shift}.
We begin with explaining two observations on the Grover-meets-Simon algorithm in the Q2 model described in Section~\ref{sec:ExistingAlg}, and how to improve it.
Then we describe how to use the idea to make a good algorithm to solve Problem~\ref{problem:search-shift} in the Q1 model.

\subsubsection{Two observations.}
Our first observation is that, when doing the Grover search over $i$ for Problem~\ref{problem:search-shift}, each time a new $i$ is tested, a new function $f_i$ is queried. But, in contrast, the function $g$ is always the same. We would like to take this asymmetry into account, namely, to make less queries to $g$ since it does not change. This in turn has many advantages: queries to $g$ can become more costly than queries to $f_i$. 

Our second observation is that, for each $i \in I$, once we have a superposition $\ket{\psi_g} = \bigotimes^{cn} \left( \sum_{x \in \zo^n}   \ket{x} \ket{g(x)} \right)$ and given a quantum oracle access to $f_i$, we can obtain the information if $f_i \oplus g$ has a period or not without making queries to $g$.

From $\ket{\psi_g}$, we can make the state $\ket{\psi_{f_i \oplus g}} = \bigotimes^{cn} \left( \sum_{x \in \zo^n} \ket{x} \ket{f_i(x) \oplus g(x)} \right)$ by making queries to $f_i$.
By applying usual Simon's procedures on $\ket{\psi_{f_i \oplus g}}$, we can judge if $f_i \oplus g$ has a period.
Moreover, by appropriately performing uncomputations, we can recover $\ket{\psi_g}$ (with some errors) and reuse it in other procedures.

With these observations in mind, below we give an intuitive description of our algorithm $\AlgQtwo$ to solve Problem~\ref{problem:search-shift} in the Q2 model (we name our algorithm $\AlgQtwo$ because it will be applied to make Q2 attacks with \emph{polynomially} many online queries in later sections).
The main ideas of $\AlgQtwo$ are separating an online phase and offline computations, and iteratively reusing the quantum data $\ket{\psi_g}$ obtained by the online phase.

\subsubsection{Algorithm \AlgQtwo (informal).}
\begin{enumerate}
\item Online phase: Make $cn$ quantum queries to $g$ to prepare $\ket{\psi_g}$.
\item Offline computations: Run the Grover search over $i \in \{0,1\}^m$.
For each fixed $i$, run a testing procedure ${\sf test}$ such that: (a) ${\sf test}$ checks if $i$ is a good element (i.e., $f_i \oplus g$ has a period) by using $\ket{\psi_g}$ and making queries to $f_i$, and (b)
after checking if $i$ is good, appropriately performs uncomputations to recover the quantum data $\ket{\psi_g}$.
\end{enumerate}
A formal description of $\AlgQtwo$ is given in Algorithm~\ref{algorithm:asymmetric}.
We fix a constant $c \geq 1$, to be set later depending on the probability of error wanted.

\begin{algorithm}[h!]
\begin{algorithmic}[1]
\State Start in the all-zero state.
\State Using $cn$ queries to $g$, create the state:
$$ \ket{\psi_g} = \bigotimes^{cn} \left( \sum_{x \in \zo^n}   \ket{x} \ket{g(x)} \right) $$

The circuit now contains $\ket{\psi_g}$, the ``g-database'', and additional registers on which we can perform Grover search. Notice that $\ket{\psi_g}$ contains $cn$ independent (and disentangled) registers.
\State Create the uniform superposition over indices $i \in \zo^m$:
$$ \ket{\psi_g} \otimes \sum_{i \in \zo^m} \ket{i} $$

\State Apply Grover iterations. The testing oracle is a unitary operator $\mathsf{test}$ that takes in input a register for $\ket{i}$ and the ``g-database'', and tests in superposition whether $f_i \oplus g$ has a hidden period.
If this is the case, it returns $\ket{b \oplus 1}$ on input $\ket{b}$. Otherwise it returns $\ket{b}$.
(Algorithm~\ref{algorithm:test} gives the details for $\sf test$ in the case that $i$ is fixed.)

\vspace*{-0.5cm}
\begin{center}
\[
\Qcircuit @C=1.0em @R=.7em {
\lstick{\ket{\psi_g}} & \qw & \multigate{2}{\mathsf{test}} & \qw & \rstick{\ket{\psi_g}} \\
\lstick{\ket{i}} & \qw & \ghost{\mathsf{test}}         & \qw & \rstick{\ket{i}} \\
\lstick{\ket{b}} & \qw & \ghost{\mathsf{test}}  & \qw & \rstick{\ket{ b \text{ or } b \oplus 1}}
}
\]
\end{center}
\vspace*{0.1cm}

The most important feature of $\mathsf{test}$ is that it does not change the g-database (up to some errors). The registers holding $\ket{\psi_g}$ are disentangled before and after the application of $\mathsf{test}$.
\State After $\bigO{2^{m/2}}$ Grover iterations, measure the index $i$.
\State If the hidden shift is also wanted, apply a single instance of Simon's algorithm (or re-use the database and perform a slightly extended computation of $\mathsf{test}$ to retrieve the result).
\end{algorithmic}
\caption{\AlgQtwo.}

\label{algorithm:asymmetric}
\end{algorithm}

We show how to implement the testing procedure $\mathsf{test}$ in Algorithm~\ref{algorithm:test} without any new query to $g$, using only exactly $2cn$ superposition queries to $F$. To write this procedure clearly, we consider a single function $f$ in input, but remark that it works as well if $f$ is a superposition of $f_i$ (as will be the case when $\mathsf{test}$ is called as the oracle of a Grover search).

\begin{algorithm}[h!]
\begin{algorithmic}[1]
\State We start with the g-database:
$$ \ket{\psi_g} = \bigotimes^{cn} \left( \sum_{x \in \zo^n}  \ket{x} \ket{g(x)} \right) $$

\State Using $cn$ superposition queries to $f$, build the state:
$$ \ket{\psi_{f \oplus g}} = \bigotimes^{cn} \left( \sum_{x \in \zo^n} \ket{x} \ket{g(x) \oplus f(x)} \right) $$

We will now perform, in a reversible way, the exact computations of Simon's algorithm to find if $g \oplus f$ has a hidden period or not (in which case $f$ and $g$ have a hidden shift).

\State Apply $\left(H^{\otimes n} \otimes I_m \right)^{cn} \otimes I_1$ to $\ket{\psi_{f \oplus g}} \otimes \ket{b}$, to obtain
\begin{align}
&\left( \sum_{u_1,x_1 \in \{0,1\}^{n}} (-1)^{u_1 \cdot x_1} \ket{u_1} \ket{(f \oplus g) (x_1)} \right) \otimes \cdots \nonumber \\
&\hspace{50pt} \cdots \otimes
\left( \sum_{u_{cn},x_{cn} \in \{0,1\}^{n}} (-1)^{u_{cn} \cdot x_{cn}} \ket{u_{cn}} \ket{(f \oplus g) (x_{cn})} \right) \otimes \ket{b}.
\end{align}
\State Compute $d := \dim({\rm Span}(u_1,\dots,u_{cn}))$, set $r:=0$ if $d = n$ and $r:=1$ if $d < n$, and add $r$ to $b$.
Then uncompute $d$ and $r$, and obtain
\begin{align}\label{eq:Alg3Step4}
&\sum_{\substack{u_1, \dots, u_{cn} \\ x_1,\dots,x_{cn}}} (-1)^{u_1 \cdot x_1} \ket{u_1} \ket{(f \oplus g) (x_{1})} \otimes \cdots \nonumber \\
&\hspace{50pt} \cdots \otimes (-1)^{u_{cn} \cdot x_{cn}} \ket{u_{cn}} \ket{(f \oplus g) (x_{cn})}  \otimes \ket{b \oplus r}.
\end{align}
Note that $r$ in \eqref{eq:Alg3Step4} depends on $u_1,\dots,u_{cn}$ and now the last register may be entangled with the registers of $u_1,\dots,u_{cn}$.
\State Uncompute $\left(H^{\otimes n} \otimes I_m \right)^{cn} \otimes I_1$.
\State Using $cn$ new superposition queries to $f$, revert $\ket{\psi_{f \oplus g}}$ to $\ket{\psi_g}$.

There are two cases:
\begin{itemize}
\item If $f \oplus g$ has a hidden period, then $r = 1$ always holds. Hence, in the output register, we always write  $1$.
\item If $f \oplus g$ does not have a hidden period, then with high probability, $r = 0$. Hence, in the output register, we write $0$.
\end{itemize}

\end{algorithmic}

\caption{The procedure {\sf test} that checks if a function $f \oplus g$ has a period against the g-database, without any new query to $g$.}
\label{algorithm:test}
\end{algorithm}

In practice, Algorithm~\ref{algorithm:test} works up to some error (see Remark~\ref{remark:errorbound}), which is amplified at each iteration of Algorithm~\ref{algorithm:asymmetric}.
The complexity and success probability (including the errors) of $\AlgQtwo$ can be analyzed as below.
\begin{proposition}\label{prop:AlgQ2}
Suppose that $m$ is in $\bigO{n}$.
Let $c$ be a sufficiently large constant. \footnote{See Proposition~\ref{prop:cval} for a concrete estimate.} Consider the setting of Problem~\ref{problem:search-shift}: let $i_0 \in \{0,1\}^m$ be the good element such that $g \oplus f_{i_0}$ is periodic and assume that~\eqref{eq:simoncondition} holds.
Then $\AlgQtwo$ finds $i_0$ with a probability in $\Theta(1)$ by making $\bigO{n}$ quantum queries to $g$ and $\bigO{n2^{m/2}}$ quantum queries to $F$.
\footnote{
In later applications, $F$ will be instantiated with unkeyed primitives, and quantum queries to $F$ are emulated with offline computations of primitives such as block ciphers.
}
The offline computation (the procedures excluding the ones to prepare the state $\ket{\psi_g}$) of $\AlgQtwo$ is done in time $\bigO{(n^3+nT_F)2^{m/2}}$, where $T_F$ is the time required to evaluate $F$ once.
\end{proposition}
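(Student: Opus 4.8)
The plan is to prove correctness and complexity separately, with correctness resting on three facts: the test of Algorithm~\ref{algorithm:test} is \emph{exact} on the good index, it is $\delta$-close to ideal on each bad index, and these per-call errors accumulate only additively across the $\bigO{2^{m/2}}$ Grover iterations. Throughout I write $h_i := f_i \oplus g$, let $T_i$ be the unitary realized by $\mathsf{test}$ with index $i$ (acting on the g-database and the answer register $\ket{b}$), and compare it against the ideal Grover oracle $\mathsf{ideal}_i$ that flips $b$ exactly when $i = i_0$.

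First I would analyze one call to $\mathsf{test}$. Writing $V$ for the combined effect of Steps~2--3 of Algorithm~\ref{algorithm:test} (building $\ket{\psi_{h_i}}$ and applying the Hadamards) and $P_1,P_0$ for the projectors onto the $u$-registers with $\dim\mathrm{Span}(u_1,\dots,u_{cn})<n$ and $=n$, the whole procedure is $V^{-1}RV$ and one obtains
\[
T_i \ket{\psi_g}\ket{b} = \bigl(V^{-1}P_0 V\ket{\psi_g}\bigr)\ket{b} + \bigl(V^{-1}P_1 V\ket{\psi_g}\bigr)\ket{b\oplus 1}.
\]
For $i = i_0$, every branch of $V\ket{\psi_g}$ has all its $u_j$ orthogonal to the period $s$, so $\dim\mathrm{Span}\le n-1$ and $P_1$ acts as the identity on the support; hence $T_{i_0}=\mathsf{ideal}_{i_0}$ \emph{exactly}. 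For $i \neq i_0$ the deviation from $\mathsf{ideal}_i$ (which leaves $b$ fixed) has norm controlled by $\|P_1 V\ket{\psi_g}\|$, i.e. by the probability that $cn$ Simon samples fail to span $\zo^n$. Since $\ket{\psi_g}$ is a tensor product, these samples are i.i.d.; the identity $\Pr_u[u\cdot v=0]=\tfrac12(1+\Pr_x[h_i(x)=h_i(x\oplus v)])$ together with Condition~\eqref{eq:simoncondition} gives $\Pr_u[u\cdot v=0]\le 3/4$ for each $v\neq 0^n$, so a union bound over the $2^n-1$ hyperplanes yields failure probability at most $2^n(3/4)^{cn}$ (the same estimate underlying Proposition~\ref{prop:SimonGen}). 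Thus each bad test is $\delta$-close to ideal with $\delta := \sqrt{2\cdot 2^n(3/4)^{cn}}$.

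The main work is the hybrid argument. Let $\ket{\psi_k}$ be the ideal Grover trajectory and $\ket{\tilde\psi_k}$ the actual one using $T_i$. Telescoping $\prod_k \tilde U_k - \prod_k U_k$ with the \emph{ideal} states on the right and absorbing the common diffusion operator gives $\|\ket{\tilde\psi_T} - \ket{\psi_T}\| \le \sum_{k}\|(O_{\mathsf{test}} - O_{\mathsf{ideal}})\ket{\psi_{k-1}}\|$. The crucial structural fact is that the ideal oracle never touches the database, so $\ket{\psi_{k-1}} = \ket{\psi_g}\otimes\sum_i \beta_i^{(k-1)}\ket{i}\ket{b}$ carries the \emph{exact} $\ket{\psi_g}$; since $O_{\mathsf{test}}-O_{\mathsf{ideal}}$ is controlled on $\ket{i}$, its action has squared norm $\sum_i |\beta_i^{(k-1)}|^2\,\|(T_i - \mathsf{ideal}_i)\ket{\psi_g}\ket{b}\|^2 \le \delta^2$ uniformly in $k$ (the $i=i_0$ term contributing $0$). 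Hence the total deviation is at most $\bigO{2^{m/2}}\cdot\delta = \bigO{2^{(m+n)/2}(3/4)^{cn/2}}$, which, because $m=\bigO{n}$, drops below any target constant once $c$ exceeds a fixed constant. I would then invoke the standard analysis of Grover search with a single marked element: the ideal run outputs $i_0$ with probability $\Theta(1)$, and the actual run, within $o(1)$ trace distance, does so as well. The resource count follows by bookkeeping: the database costs $cn=\bigO{n}$ queries to $g$; each of the $\bigO{2^{m/2}}$ tests uses $2cn=\bigO{n}$ queries to $F$ and runs in $\bigO{n^3+nT_F}$ time (Gaussian elimination on $cn$ vectors plus the $F$-evaluations and Hadamards), giving $\bigO{n2^{m/2}}$ queries to $F$ and offline time $\bigO{(n^3+nT_F)2^{m/2}}$.

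I expect the delicate point to be precisely this hybrid step. The naive operator-norm bound $\|O_{\mathsf{test}}-O_{\mathsf{ideal}}\|$ is useless, because $T_i$ is only guaranteed to behave well on inputs close to the true database $\ket{\psi_g}$ and may be arbitrary elsewhere. The resolution is to telescope against the ideal trajectory, which transports the exact $\ket{\psi_g}$ at every step, so the per-step error is always measured on $\ket{\psi_g}$ where the bound $\delta$ is valid; the control structure in the $i$-register then guarantees that the $2^m$ distinct bad branches combine only in probability, not in amplitude, which is what keeps the accumulated error at $\bigO{2^{m/2}\delta}$ rather than $\bigO{2^{m}\delta}$.
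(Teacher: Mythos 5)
Your proposal is correct and follows essentially the same route as the paper's proof: the same per-call error bound for $\mathsf{test}$ (exactness on $i_0$, and for bad $i$ a deviation of norm $\sqrt{2}\,\|P_1 V\ket{\psi_g}\| \le 2^{(n+1)/2}(3/4)^{cn/2}$ via the hyperplane union bound and the identity $\Pr_u[u\cdot t=0]=\tfrac12(1+\Pr_x[h_i(x\oplus t)=h_i(x)])$, using orthogonality of the $\ket{i}$ branches to combine bad indices in probability rather than amplitude), followed by additive accumulation of this error over the $\bigO{2^{m/2}}$ iterations measured against the ideal trajectory carrying the exact $\ket{\psi_g}$. The paper packages the second step as a general "QAA with uncertain checking procedures" proposition that tracks the two-dimensional rotation subspace, whereas you phrase it as a telescoping hybrid argument, but these are the same bound ($O(2^{m/2})\cdot\delta$) obtained the same way.
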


See Section~A in the \acversion{full version of the paper~\cite{DBLP:journals/iacr/BonnetainHNSS19}}\fullversion{Appendix} for a proof.

\begin{remark}\label{remark:errorbound}
Intuitively, the error produced in each iteration of Algorithm~\ref{algorithm:test} is bounded by the maximum, on $i$, of:
$
p^{(i)}:= \Pr \left[ \dim({\rm Span}(u_1,\dots,u_{cn})) < n\right],
$
when $u_1,\dots,u_{cn}$ are produced with Simon's algorithm, \emph{i.e.} the probability that Simon's algorithm returns the incorrect answer ``$f_i \oplus g$ is periodic'' even though $f_i \oplus g$ is not periodic.
From condition~\eqref{eq:simoncondition}, we can show that $p^{(i)} \leq 2^{(n+1)/2} ((1 + \frac{1}{2}) / 2)^{cn/2}$ holds (see Lemma~1 in the \acversion{full version of the paper~\cite{DBLP:journals/iacr/BonnetainHNSS19}}\fullversion{Appendix}).
\end{remark}

\begin{remark}
$\AlgQtwo$ finds the index $i$ such that $f_i \oplus g$ has a period,  but does not return the actual period of $f_i \oplus g$.
However, we can find the actual period of $f_i \oplus g$ (after finding $i$ with $\AlgQtwo$) by applying Simon's algorithm to $f_i \oplus g$.
\end{remark}

\paragraph{Summary.}
With $\AlgQtwo$, we realize an ``asymmetric'' variant of Simon's algorithm, in which we store a ``compressed'' database for a single function $g$, which is not modified (up to some errors) while we test whether another function $f$ has a hidden shift with $g$, or not. An immediate application of this algorithm will be to 
achieve an exponential improvement of the query complexity of some Q2 attacks on symmetric schemes. Indeed, in the context where Simon's algorithm and Grover's algorithm are combined, it may be possible to perform the queries to the secret-key cryptographic oracle only once, and so, to lower the query complexity to $\bigO{n}$. 

\subsection{Asymmetric Search with Q1 Queries}
In $\AlgQtwo$, (online) queries to $g$ and (offline) queries to $F$ are separated, and only $cn$ superposition queries to $g$ are made.
Hence another tradeoff is at our reach, which was not possible when $g$ was queried in each Grover iteration: removing superposition queries to $g$ completely.

\begin{algorithm}[h!]
\begin{algorithmic}[1]
\Statex \textbf{Input: } Classical query access to $g$
\Statex \textbf{Output: } The g-database:
$$ \ket{\psi_g} = \bigotimes^{cn} \left( \sum_{x \in \zo^n} \ket{x} \ket{g(x)} \right) $$

\State Start with the all-zero state 
$$ \bigotimes^{cn} \ket{0} \ket{0} $$

\State Apply Hadamard gates:
$$ \bigotimes^{cn} \sum_{x \in \zo^n}  \ket{x} \ket{0} $$

\State For each $y \in \zo^n$, query (classically) $g(y)$, then apply a unitary which writes $g(y)$ in the second register if the first contains the value $y$.
\end{algorithmic}

\caption{Producing the g-database $\ket{\psi_g}$.}
\label{algorithm:database}
\end{algorithm}

This requires now to query \emph{the whole codebook} for $g$ to prepare the quantum state $\ket{\psi_g}$.
Once $\ket{\psi_g}$ is built, the second offline phase runs in exactly the same way.
Building $\ket{\psi_g}$ costs roughly $2^n$ time (and classical queries), while going through the search space for $f$ takes $2^{m/2}$ iterations (and quantum queries to $F$).
We call our new algorithm in the Q1 model $\AlgQone$ because it will be applied to make Q1 attacks with \emph{exponentially} many online queries in later sections.
The optimal point arrives when $m = 2n$.

Below we give an intuitive description of our algorithm $\AlgQone$ to solve Problem~\ref{problem:search-shift} in the Q1 model.
As described above, the difference between $\AlgQone$ and $\AlgQtwo$ is the online phase to prepare $\ket{\psi_g}$.
\subsubsection{Algorithm \AlgQone (informal).}
\begin{enumerate}
\item Online phase: Make $2^n$ classical queries to $g$ and prepare the state $\ket{\psi_g}$.
\item Offline computations: Run the Grover search over $i \in \{0,1\}^m$.
For each fixed $i$, run a testing procedure ${\sf test}$ such that: (a) ${\sf test}$ checks if $i$ is a good element (i.e., $f_i \oplus g$ has a period) by using $\ket{\psi_g}$ and making queries $f_i$, and (b)
after checking if $i$ is good, appropriately perform uncomputations to recover the quantum data $\ket{\psi_g}$.
\end{enumerate}
A formal description of $\AlgQone$ is the same as that of $\AlgQtwo$ (Algorithm~\ref{algorithm:asymmetric}) except that we make $2^n$ classical queries to $g$ to prepare the quantum state $\ket{\psi_g}$.
See Algorithm~\ref{algorithm:database} for formal description of the online phase.
The complexity and success probability (including the errors) of $\AlgQone$ can be analyzed as below.
\begin{proposition}\label{prop:AlgQ1}
Suppose that $m$ is in $\bigO{n}$.
Let $c$ be a sufficiently large constant. \footnote{See Proposition~\ref{prop:cval} for a concrete estimate.} Consider the setting of Problem~\ref{problem:search-shift}: let $i_0 \in \{0,1\}^m$ be the good element such that $g \oplus f_{i_0}$ is periodic and assume that~\eqref{eq:simoncondition} holds.
Then $\AlgQone$ finds $i_0$ with a probability in $\Theta(1)$ by making $\bigO{2^n}$ classical queries to $g$ and $\bigO{n2^{m/2}}$ quantum queries to $F$.
\footnote{
Again, in later applications, $F$ will be instantiated with unkeyed primitives, and quantum queries to $F$ are emulated with offline computations of primitives such as block ciphers.
}
The offline computation (the procedures excluding the ones to prepare the state $\ket{\psi_g}$) of $\AlgQone$ is done in time $\bigO{(n^3+nT_F)2^{m/2}}$, where $T_F$ is the time required to evaluate $F$ once.
\end{proposition}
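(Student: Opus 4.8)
The plan is to reduce everything to Proposition~\ref{prop:AlgQ2}. The key structural observation is that $\AlgQone$ and $\AlgQtwo$ share \emph{exactly} the same offline phase (the Grover search equipped with the testing procedure $\mathsf{test}$ of Algorithm~\ref{algorithm:test}); they differ only in how the g-database is produced during the online phase. Consequently, once I show that Algorithm~\ref{algorithm:database} prepares precisely the state
$$ \ket{\psi_g} = \bigotimes^{cn} \left( \sum_{x \in \zo^n} \ket{x}\ket{g(x)} \right) $$
using $\bigO{2^n}$ classical queries, every statement about the offline phase --- the $\bigO{n 2^{m/2}}$ quantum queries to $F$, the offline running time $\bigO{(n^3 + nT_F)2^{m/2}}$, and the $\Theta(1)$ success probability under Condition~\eqref{eq:simoncondition} --- is inherited verbatim from Proposition~\ref{prop:AlgQ2}.

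First I would verify the correctness of Algorithm~\ref{algorithm:database}. After the Hadamard layer the state is $\bigotimes^{cn} \sum_{x \in \zo^n} \ket{x}\ket{0}$. For a fixed $y \in \zo^n$, one classical query reveals the constant $g(y)$, and the controlled unitary $U_y$ that XORs this constant into the second register conditioned on the first register holding $y$ acts as $\ket{x}\ket{0} \mapsto \ket{x}\ket{0}$ for $x \neq y$ and $\ket{y}\ket{0} \mapsto \ket{y}\ket{g(y)}$. Because distinct $U_y$ are controlled on disjoint values of the first register, on every computational-basis component at most one of them acts nontrivially; hence they commute and, since the second register starts in $\ket{0}$, each component $\ket{x}$ is acted on by exactly one $U_y$ (namely $y=x$). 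Applying all $U_y$ therefore turns each of the $cn$ copies into $\sum_{x}\ket{x}\ket{g(x)}$, producing $\ket{\psi_g}$ exactly, at a cost of one classical query per $y$, i.e.\ $\bigO{2^n}$ classical queries to $g$ in total.

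With $\ket{\psi_g}$ in hand, the remainder of $\AlgQone$ is bit-for-bit identical to $\AlgQtwo$, so I would simply invoke Proposition~\ref{prop:AlgQ2}: the offline computation makes $\bigO{n 2^{m/2}}$ quantum queries to $F$ (emulated as offline evaluations of the unkeyed primitive), runs in time $\bigO{(n^3 + nT_F)2^{m/2}}$, and outputs the good index $i_0$ with probability $\Theta(1)$. Combining this with the online cost yields exactly the claimed complexities, the only difference from $\AlgQtwo$ being that the $\bigO{n}$ superposition queries to $g$ are replaced by $\bigO{2^n}$ classical queries.

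The step requiring the most care --- and the only genuine obstacle --- is arguing that the classically-built database is the \emph{same} quantum state as the one prepared by $cn$ superposition queries in $\AlgQtwo$, so that switching to classical queries introduces no new error source. Here I would stress that Algorithm~\ref{algorithm:database} performs no measurement and writes only classically-known constants through the exact unitaries $U_y$, so its output is a pure state equal to $\ket{\psi_g}$ on the nose. It follows that the entire error budget of $\AlgQone$ comes from the offline test procedure alone --- precisely the quantity $p^{(i)}$ bounded in Remark~\ref{remark:errorbound} and amplified across the $\bigO{2^{m/2}}$ Grover iterations --- and hence the success-probability analysis of Proposition~\ref{prop:AlgQ2} transfers without modification.
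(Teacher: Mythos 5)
Your proposal is correct and follows essentially the same route as the paper: the paper likewise observes that the online phase is the only difference (costing $2^n$ classical queries versus $cn$ quantum ones) and proves a single common lemma about the shared offline phase that yields both Proposition~\ref{prop:AlgQ2} and Proposition~\ref{prop:AlgQ1}, which is logically the same as your reduction of the latter to the former. Your explicit verification that Algorithm~\ref{algorithm:database} prepares $\ket{\psi_g}$ exactly is a detail the paper treats as obvious, but it is a welcome addition rather than a deviation.
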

A proof is given in Section~A in the \acversion{full version of the paper~\cite{DBLP:journals/iacr/BonnetainHNSS19}}\fullversion{Appendix}.

\subsubsection{Finding actual periods.}
The above algorithm $\AlgQone$ returns the index $i_0$ such that $f_{i_0} \oplus g$ has a period, but does not return the actual period.
Therefore, if we want to find the actual period of $f_{i_0} \oplus g$ after finding $i_0$, we have to apply Simon's algorithm to $f_{i_0} \oplus g$ again.
Now we can make only classical queries to $g$, though, we can use the same idea with $\AlgQone$ to make an algorithm $\SimQone$ that finds the period of $f_{i_0} \oplus g$.
Again, let $c$ be a positive integer constant.\\

\noindent\textbf{Algorithm $\SimQone$.}
\begin{enumerate}
\item Make $2^n$ classical queries to $g$ and prepare the quantum state $\ket{\psi_g}$.
\item Make $cn$ quantum queries to $f_{i_0}$ to obtain the quantum state $\ket{\psi_{f_{i_0} \oplus g}} = \bigotimes^{cn} \left( \sum_x \ket{x} \ket{f_{i_0}(x) \oplus g (x)} \right) $.
\item Apply $H^{\otimes n}$ to each $\ket{x}$ register to obtain the state 
$$\bigotimes^{cn} \left( \sum_{x,u} (-1)^{x \cdot u}\ket{u} \ket{f_{i_0}(x) \oplus g (x)} \right) \enspace.$$
\item Measure all $\ket{u}$ registers to obtain $cn$ vectors $u_1,\dots,u_{cn}$.
\item Compute the dimension $d$ of the vector space $V$ spanned by $u_1,\dots,u_{cn}$. If $d \neq n-1$, return $\perp$. If $d = n-1$, compute the vector $v \neq 0^n \in \{0,1\}^n$ that is orthogonal to $V$.
\end{enumerate} 
Obviously the probability that the above algorithm $\SimQone$ returns the period of $f_{i_0} \oplus g$ is the same as the probability that the original Simon's algorithm returns the period, under the condition that $cn$ quantum queries can be made to the function $f_{i_0} \oplus g$.
Thus, from Proposition~\ref{prop:SimonGen}, the following proposition holds.

\begin{proposition}\label{prop:SimQ1}
Suppose that $f_{i_0} \oplus g$ has a period $s \neq 0^n$ and satisfies
\begin{equation}
\max_{t \neq \{s,0^n\}} \Pr_x\left[ (f_{i_0} \oplus g)(x \oplus t) = (f_{i_0} \oplus g)(x) \right] \leq \frac{1}{2}.
\end{equation}
Then $\SimQone$ returns $s$ with a probability at least $1 - 2^n \cdot (3/4)^{cn}$ by making $\bigO{2^n}$ classical queries to $g$ and $cn$ quantum queries to $f_{i_0}$.
The offline computation of $\SimQone$ (the procedures excluding the ones to prepare the state $\ket{\psi_g}$) runs in time $\bigO{n^3 + nT_f}$, where $T_f$ is the time required to evaluate $f_{i_0}$ once.
\end{proposition}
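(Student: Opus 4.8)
The plan is to reduce the correctness and success probability of $\SimQone$ to those of the ordinary Simon's algorithm (Algorithm~\ref{algorithm:simonproc}) applied directly to the single function $h := f_{i_0} \oplus g$, and then to invoke Proposition~\ref{prop:SimonGen}. The crux is to show that the quantum state held after step~2 of $\SimQone$ is exactly the post-query state that Simon's subroutine produces from quantum oracle access to $h$, so that the joint distribution of the $cn$ measured vectors is identical in the two procedures.

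First I would track the state evolution. Step~1 prepares the $g$-database $\ket{\psi_g} = \bigotimes^{cn}\bigl(\sum_{x \in \zo^n} \ket{x}\ket{g(x)}\bigr)$ with $\bigO{2^n}$ classical queries, as in Algorithm~\ref{algorithm:database}. The $cn$ quantum queries to $f_{i_0}$ in step~2 XOR $f_{i_0}(x)$ into each output register, turning every factor $\sum_x \ket{x}\ket{g(x)}$ into $\sum_x \ket{x}\ket{g(x)\oplus f_{i_0}(x)} = \sum_x \ket{x}\ket{h(x)}$. Thus after step~2 we hold exactly $\bigotimes^{cn}\bigl(\sum_x \ket{x}\ket{h(x)}\bigr)$, which is precisely the post-query state of Simon's subroutine on $h$ (line~3 of Algorithm~\ref{algorithm:simonproc}), with no residual entanglement to any auxiliary register. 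Steps~3--4 then apply $H^{\otimes n}$ to each $\ket{x}$ register and measure it, exactly as in lines~5--6; the only formal difference is that Simon's subroutine optionally measures the output register beforehand, which by the principle of deferred measurements leaves the marginal distribution on the measured $\ket{u}$ register unchanged. Hence the $cn$ vectors $u_1,\dots,u_{cn}$ follow the same distribution as the outputs of $cn$ independent runs of Simon's subroutine on $h$.

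Next I would check that the extraction rule in step~5 realizes exactly the success event of Simon's algorithm. As $h$ has period $s$, every measured $u_i$ satisfies $u_i \cdot s = 0$, so $s \in V^\perp$ with $V := {\rm Span}(u_1,\dots,u_{cn})$, and therefore $d := \dim V \leq n-1$ always. If $d = n-1$, the orthogonal complement is one-dimensional, so $V^\perp = \{0^n, s\}$ and the unique nonzero $v$ returned by step~5 equals $s$; if $d < n-1$ the period is underdetermined and $\SimQone$ outputs $\perp$. This is verbatim the success condition of Simon's algorithm. Since $h = f_{i_0}\oplus g$ has period $s \neq 0^n$ and satisfies the hypothesized bound $\max_{t \notin \{s,0^n\}} \Pr_x[h(x\oplus t)=h(x)] \leq 1/2$, Proposition~\ref{prop:SimonGen} applies with $f = h$ and yields $\Pr[d = n-1] \geq 1 - 2^n\cdot(3/4)^{cn}$, which is the claimed success probability.

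Finally I would tally resources. Step~1 makes $\bigO{2^n}$ classical queries to $g$; step~2 makes $cn$ quantum queries to $f_{i_0}$ at cost $\bigO{nT_f}$. Steps~3--4 apply $cn$ blocks of $H^{\otimes n}$ and measure $cn$ registers, costing $\bigO{n^2}$ gates. Step~5 computes $d$ (and, when $d = n-1$, the orthogonal vector) by Gaussian elimination on a $cn \times n$ matrix over $\zo$, costing $\bigO{n^3}$. Summing the offline contributions gives $\bigO{n^3 + nT_f}$, as claimed. The one step requiring genuine care is the state-equivalence argument of the second paragraph: one must verify that assembling $\ket{\psi_h}$ from a classically prepared $g$-database together with quantum $f_{i_0}$-queries reproduces the exact oracle-query state, with no leftover entanglement and identical measurement statistics, so that Proposition~\ref{prop:SimonGen} transfers without any loss.
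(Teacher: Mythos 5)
Your proposal is correct and follows exactly the route the paper takes: the paper simply observes that $\SimQone$'s output distribution coincides with that of Simon's algorithm run on $f_{i_0}\oplus g$ and then invokes Proposition~\ref{prop:SimonGen}, which is precisely your reduction, just stated without the state-tracking and resource-counting details you supply. Your elaboration of the state equivalence and of the $d=n-1$ extraction condition is a faithful filling-in of what the paper dismisses as ``obvious.''
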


\begin{proposition}[Concrete cost estimates]\label{prop:cval}
 In practice, for Propositions~\ref{prop:AlgQ2} and~\ref{prop:AlgQ1}, $c\simeq m/\left(n\log_2(4/3)\right)$ is sufficient.
\end{proposition}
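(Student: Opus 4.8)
The plan is to read Proposition~\ref{prop:cval} as the quantitative companion of the error analyses underlying Propositions~\ref{prop:AlgQ2} and~\ref{prop:AlgQ1}: it merely records the value of $c$ for which the accumulated error of those algorithms stays below a constant, so that the success probability is $\Theta(1)$. Both $\AlgQone$ and $\AlgQtwo$ run, after the online preparation of $\ket{\psi_g}$, a Grover search over $i\in\zo^m$ with $\bigO{2^{m/2}}$ iterations, each invoking the imperfect testing oracle of Algorithm~\ref{algorithm:test}. The first step is therefore to invoke the standard robustness of Grover search: if at every iteration the operator actually applied differs from the ideal $\mathsf{test}$ by at most $\delta$ in the relevant (trace/Euclidean) distance, then a hybrid argument bounds the distance between the true final state and the ideal final state by $\bigO{2^{m/2}\,\delta}$. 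Hence it suffices to force $2^{m/2}\delta=\bigO{1}$.

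Second, I would quantify $\delta$ using Remark~\ref{remark:errorbound}. The only source of imperfection in $\mathsf{test}$ is that, for a bad index $i$, Simon's subroutine may wrongly report a period; by the remark this happens with probability $p^{(i)}\le 2^{(n+1)/2}(3/4)^{cn/2}$, and the imperfect restoration of the g-database contributes an error of the same order. Taking $\delta=\max_i p^{(i)}$, the requirement of constant overall error becomes
$$2^{m/2}\cdot 2^{(n+1)/2}\left(\tfrac{3}{4}\right)^{cn/2}=\bigO{1}.$$
Taking base-$2$ logarithms (and using $\log_2(3/4)=-\log_2(4/3)$) turns this into a linear inequality in $c$,
$$\frac{m}{2}+\frac{n+1}{2}-\frac{cn}{2}\log_2\!\left(\tfrac{4}{3}\right)\le \bigO{1}.$$

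Third, solving for $c$ gives $c\ge (m+n+1)/\bigl(n\log_2(4/3)\bigr)$. The dominant balance is between the $2^{m/2}$ Grover iterations and the $(3/4)^{cn/2}$ decay, i.e. between the terms $m/2$ and $\tfrac{cn}{2}\log_2(4/3)$, which alone yields $c\ge m/\bigl(n\log_2(4/3)\bigr)$. Because $m=\bigO{n}$, the remaining prefactors (the $2^{(n+1)/2}$ and the $\mathsf{poly}(n)$ overheads of Algorithms~\ref{algorithm:asymmetric} and~\ref{algorithm:test}) contribute only an additive $\bigO{1}$ to the threshold value of $c$, independent of $n$; this is precisely what the approximate sign in $c\simeq m/\bigl(n\log_2(4/3)\bigr)$ absorbs.

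I expect the genuinely delicate point to be the robustness/hybrid step rather than the arithmetic. One must argue that the per-iteration perturbations---coming both from the probabilistic failure of the dimension test and from the fact that after \eqref{eq:Alg3Step4} the output register can stay slightly entangled with the $u_j$ registers, so that $\ket{\psi_g}$ is only approximately restored---accumulate \emph{additively}, not multiplicatively or with amplification, across the $\bigO{2^{m/2}}$ iterations. This requires converting the probability bound $p^{(i)}$ of Remark~\ref{remark:errorbound} into a bound on the state distance that governs Grover's stability, and checking that this conversion does not introduce an extra factor (a square root, say) that would change the numerator from $m$ to $2m$ and thus spoil the clean estimate. Once that conversion is pinned down and shown to be consistent with the bound used in the proofs of Propositions~\ref{prop:AlgQ2} and~\ref{prop:AlgQ1}, the stated value of $c$ follows immediately from the log-linear inequality above.
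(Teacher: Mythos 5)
Your proposal follows the same route as the paper's proof: bound the per-iteration error of $\mathsf{test}$ by $2^{(n+1)/2}(3/4)^{cn/2}$ (the norm bound of Lemma~\ref{lem:SimonErrorBound}, which already incorporates the square-root conversion from failure probability to state distance that you rightly flag as the delicate point), multiply by the $\approx 4\cdot 2^{m/2}$ Grover iterations via the additive hybrid bound of Proposition~\ref{prop:QAAuncertainKnowinga}, and solve the resulting log-linear inequality for $c$. If anything you are slightly more careful than the paper, whose last line silently drops the $(n+1)/2$ contribution; as you observe, that term adds the $n$-independent constant $1/\log_2(4/3)\approx 2.4$ to the threshold for $c$, which the ``$\simeq$'' in the statement is implicitly absorbing.
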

\begin{proof}
 We need to have $4\lfloor \pi / \left(4\arcsin\left(2^{-m/2}\right)\right) \rfloor 2^{(n+1)/2}(3/4)^{cn/2}< 1/2$.
 
 In practice, $\arcsin(x) \simeq x$ and the rounding has a negligible impact. Hence, we need that
 $m/2+(n+1)/2+\log_2(\pi) + \log_2(3/4)cn/2 < -1$.
 
 This reduces to $c > \log_2(4/3)^{-1}\left(m+3+2\log_2(\pi)\right)/n \simeq m/\left(n\log_2(4/3)n\right)$.
\end{proof}

\begin{remark}
  If $m = n$, this means $ c\simeq 2.5$, and if $m = 2n$, $c \simeq 5$.
\end{remark}

\section{Q2 Attacks on Symmetric Schemes with Reduced Query Complexity}\label{sec:q2}

This section shows that our new algorithm $\AlgQtwo$ can be used to construct Q2 attacks on various symmetric schemes.
By using $\AlgQtwo$ we can exponentially reduce the number of quantum queries to the keyed oracle compared to previous Q2 attacks, with the same time cost.

In each application, we consider that one evaluation of each primitive (e.g., a block cipher) can be done in time $\bigO{1}$, for simplicity. For our practical estimates, we use the cost of the primitive as our unit, and consider that it is greater than the cost of solving the linear equation system.
In addition, we assume that key lengths of $n$-bit block ciphers are in $\bigO{n}$, which is the case for usual block ciphers.

\subsection{An Attack on the FX construction}
Here we show a Q2 attack on the FX construction.
As described in Section~\ref{sec:algorithm}, the FX construction builds an $n$-bit block cipher $\mathrm{FX}_{k,k_{in},k_{out}}$ with $(2n+m)$-bit keys ($k_{in},k_{out} \in \{0,1\}^n$ and $k \in \{0,1\}^m$) from another $n$-bit block cipher $E_{k}$ with $m$-bit keys as
\begin{equation}
\mathrm{FX}_{k,k_{in},k_{out}} (x) := E_k(x \oplus k_{in}) \oplus k_{out}.
\end{equation}

This construction is used to obtain a block cipher with long ($(2n+m)$-bit) keys from another block cipher with short ($m$-bit) keys.
Roughly speaking, in the classical setting, the construction is proven to be secure up to $\bigO{2^{(n+m)/2}}$ queries and computations if the underlying block cipher is secure~\cite{DBLP:conf/crypto/KilianR96}.

Concrete block ciphers such as DESX, proposed by Rivest in 1984 and analyzed in~\cite{DBLP:conf/crypto/KilianR96}, PRINCE~\cite{DBLP:conf/asiacrypt/BorghoffCGKKKLNPRRTY12}, and PRIDE~\cite{DBLP:conf/crypto/AlbrechtDKLPY14} are designed based on the FX construction.
To estimate security of these block ciphers against quantum computers, it is important to study quantum attacks on the FX construction.

As briefly explained in Section~\ref{sec:algorithm}, the previous Q2 attack by Leander and May~\cite{DBLP:conf/asiacrypt/Leander017} breaks the FX construction by making $\bigO{ n2^{m/2}}$ quantum queries, and its time complexity is $\bigO{n^32^{m/2}}$.

\subsubsection{Application of our algorithm $\AlgQtwo$.}
Below we show that, by applying our algorithm $\AlgQtwo$, we can recover keys of the FX construction with only $\bigO{n}$ quantum queries. 
Time complexity of our attack remains $\bigO{n^32^{m/2}}$, which is the same as Leander and May's.

\paragraph{Attack idea.}
As explained in Section~\ref{sec:algorithm}, the problem of recovering the keys $k$ and $k_{in}$ of the FX construction $F_{k,k_{in},k_{out}}$ can be reduced to Problem~\ref{problem:search-shift}:
Define $F : \{0,1\}^m \times \{0,1\}^n \rightarrow \{0,1\}^n$ and $g : \{0,1\}^n \rightarrow \{0,1\}^n$ by 
\[
 \begin{matrix} F(i,x) &=& E_i(x) \oplus E_i(x\oplus 1)\\
 g(x) &=& \FX_{k,k_{in}, k_{out}}(x)\oplus \FX_{k,k_{in}, k_{out}}(x\oplus1).\end{matrix}
\]
Then 
\begin{equation}
f_k(x) \oplus g(x) = f_k(x \oplus k_{in}) \oplus g(x \oplus k_{in})
\end{equation}
holds, i.e., $f_k \oplus g(x)$ has a period $k_{in}$ (note that $f_k(x) = F(k,x)$).
If $E$ is a secure block cipher and $E_i$ is a random permutation for each $i$, intuitively, $f_{i} \oplus g$ does not have any period for $i \neq k$.
Thus the problem of recovering $k$ and $k_{in}$ is reduced to Problem~\ref{problem:search-shift} and we can apply our algorithm $\AlgQtwo$.
Formally, the attack procedure is as follows.
\paragraph{Attack description.}
\begin{enumerate}
\item Run $\AlgQtwo$ for the above $F$ and $g$ to recover $k$.
\item Apply Simon's algorithm to $f_k \oplus g$ to recover $k_{in}$.
\item Compute $k_{out} = E_k(0^n) \oplus \FX_{k,k_{in}, k_{out}}(0^n)$.
\end{enumerate}

Next we give a complexity analysis of the above attack.
\paragraph{Analysis.}
We assume that $m = \bigO{n}$, which is the case for usual block ciphers.
If $E$ is a secure block cipher and $E_i$ is a random permutation for each $i \in \{0,1\}^m$, we can assume that $f_k \oplus g = E_k \oplus E_k(\cdot \oplus 1) \oplus \FX_{k,k_{in}, k_{out}} \oplus \FX_{k,k_{in}, k_{out}}(\cdot\oplus1) $ is far from periodic for all $i \neq k$, and that assumption~\eqref{eq:simoncondition} in Problem~\ref{problem:search-shift} holds.

Hence, by Proposition~\ref{prop:AlgQ2}, $\AlgQtwo$ recovers $k$ with a high probability by making $\bigO{n}$ quantum queries to $g$ and $\bigO{n2^{m/2}}$ quantum queries to $F$, which implies that $k$ is recovered only with $\bigO{n}$ quantum queries made to $\FX_{k,k_{in},k_{out}}$, and in time $\bigO{n^3 2^{m/2}}$.
(Note that one evaluation of $g$ (resp., $F$) can be done by $\bigO{1}$ evaluations of $\FX_{k,k_{in},k_{out}}$ (resp., $E$).)

From Proposition~\ref{prop:SimonGen}, the second step can be done with $\bigO{n}$ quantum queries in time $\bigO{n^3}$. It is obvious that the third step can be done efficiently.

In summary, our attack recovers the keys of the FX construction with a high probability by making $\bigO{n}$ quantum queries to the (keyed) online oracle, and it runs in time $\bigO{n^32^{m/2}}$.

\subsubsection{Applications to DESX, PRINCE and PRIDE.}
DESX~\cite{DBLP:conf/crypto/KilianR96} has a 64-bit state, two 64-bit whitening key and one
56-bit inner key. From Propositions~\ref{prop:AlgQ2} and~\ref{prop:cval}, we can estimate that our attack needs roughly $135$ quantum queries and $2^{29}$ quantum computations of the cipher circuit. 

PRINCE~\cite{DBLP:conf/asiacrypt/BorghoffCGKKKLNPRRTY12}, and PRIDE~\cite{DBLP:conf/crypto/AlbrechtDKLPY14} are two ciphers using the FX construction with a 64-bit state, a 64-bit inner key and two 64-bit whitening keys. Hence, from Propositions~\ref{prop:AlgQ2} and~\ref{prop:cval}, we can estimate that our attack needs roughly $155$ quantum queries and $2^{33}$ quantum computations of the cipher circuit.

\section{Q1 Attacks on Symmetric Schemes}\label{sec:q1}
This section shows that our new algorithm $\AlgQone$ can be used to construct Q1 attacks on various symmetric schemes, with a tradeoff between online classical queries, denoted below by $D$, and offline quantum computations, denoted below by $T$.

All the algorithms that we consider run with a single processor, but they can use quantum or classical memories, whose amount is respectively denoted by $Q$ (number of qubits) and $M$.
Again, we consider that one evaluation of each primitive (\emph{e.g.} a block cipher) can be done in time $\bigO{1}$, for simplicity, and we assume that key lengths of $n$-bit block ciphers are in $\bigO{n}$.

\subsection{Tradeoffs for the Even-Mansour Construction}\label{sec:EMQ1}
The Even-Mansour construction~\cite{DBLP:journals/joc/EvenM97} is a simple construction to make an $n$-bit block cipher $E_{k_1,k_2}$ from an $n$-bit public permutation $P$ and two $n$-bit keys $k_1,k_2$ (see Fig.~\ref{fig:EM}).
The encryption $E_{k_1,k_2}$ is defined as $E_{k_1,k_2}(x) := P(x \oplus k_1) \oplus k_2$, and the decryption is defined accordingly.

In the classical setting, roughly speaking, the Even-Mansour construction is proven secure up to $\bigO{2^{n/2}}$ online queries and offline computations~\cite{DBLP:journals/joc/EvenM97}.
In fact there exists a classical attack with tradeoff $TD=2^n$, which balances at $T=D=2^{n/2}$~\cite{DBLP:conf/asiacrypt/Daemen91}.

\begin{figure}
\centering
\begin{tikzpicture}
  \draw
  node at (0,0)[l,name=m] {$x$}
  node [xor,circle, name=xor1, right of=m] {}
  node [block,name=f1, right of=xor1] {$P$}
  node [xor,circle, name=xor2, right of=f1] {}
  node [name=k1, above of=xor1] {$k_1$}
  node [name=k2, above of=xor2] {$k_2$}
  node [name=out, right of=xor2, right] {$E_{k_1,k_2}(x)$};
  \draw[->] (m) -- (f1) -- (out);
  \draw[->] (k1) -- (xor1);
  \draw[->] (k2) -- (xor2);
 \end{tikzpicture}
\caption{The Even-Mansour construction.}
\label{fig:EM}
\end{figure}
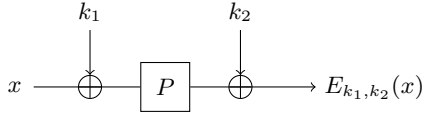

\subsubsection{Previous Q1 attacks on the Even-Mansour construction.}
Kuwakado and Morii gave a Q1 attack that recovers keys of the Even-Mansour construction with $\bigO{2^{n/3}}$ classical queries and qubits, and $\bigO{2^{n/3}}$ offline quantum computations~\cite{DBLP:conf/isita/KuwakadoM12}.
Their attack is based on a claw-finding algorithm by Brassard \emph{et al.}~\cite{DBLP:conf/latin/BrassardHT98}, and gives the tradeoff $T^2 D=2^n$, with additional  $Q=D$ qubits.
The balanced point $2^{n/3}$ is significantly smaller than the classical balanced point $2^{n/2}$.
However, if we want to recover keys with this attack in time $T \ll 2^{n/2}$, we need an exponential amount of qubits.

\paragraph{Main previous attacks with polynomial qubits}

The best classical attacks allow a trade-off of $D \cdot T= 2^n$ (see~\cite{DBLP:conf/asiacrypt/DinurDKS14} for other trade-offs involving memory). With Grover we could recover the keys with a complexity of $2^{n/2}$ and 2 plaintexts-ciphertext pairs, $(P_1,C_1)$ and $(P_2,C_2)$, by performing an exhaustive search over the value of $k_1$ that would verify $P(P_1\oplus k_1)\oplus P(P_2\oplus k_1)= C_1\oplus C_2$.
In~\cite{DBLP:conf/ctrsa/HosoyamadaS18}, Hosoyamada and Sasaki also gave a tradeoff $D \cdot T^6=2^{3n}$ for $D \leq 2^{3n/7}$ under the condition that only polynomially many qubits are available, by using the multi-target preimage search by Chailloux et al~\cite{DBLP:conf/asiacrypt/ChaillouxNS17}.
$D$ and $T$ are balanced at $D=T=2^{3n/7}$, which is smaller than the classical balanced point $2^{n/2}$.
The attack uses only polynomially many qubits, but requires $M=D^{1/3} = 2^{n/7}$ classical memory. At the balanced point, this still represents an exponentially large storage.
Note that this is the only previous work that recover keys in time $T \ll 2^{n/2}$ with polynomially many qubits.

\begin{table}[htb]
\centering
\begin{tabular}{|c||c|c|c|c|c|c|}
\toprule
Reference & Classical attack & Grover & \cite{DBLP:conf/ctrsa/HosoyamadaS18} & \cite{DBLP:conf/isita/KuwakadoM12}& \bf{[Ours]} \\
\midrule
\begin{tabular}{c}
Tradeoff of \\ $D$ and $T$
\end{tabular}
& $D \cdot T=2^n$ &

\begin{tabular}{c}
\\
$T=2^{n/2}$, \\
$D =\mathrm{constant}$
\end{tabular}
&
\begin{tabular}{c}
\\
$D \cdot T^6=2^{3n}$ \\
$(D \leq 2^{3n/7})$
\end{tabular}
& \begin{tabular}{c} \\ $D=2^{n/3}$, \\ $T=2^{n/3}$ \end{tabular}
& $D \cdot T^2 = 2^n$ \\
\midrule
\begin{tabular}{c}
Num. of qubits
\end{tabular}
& - & $\mathsf{poly}(n)$ & $\mathsf{poly}(n)$ & $2^{n/3}$ & $\mathsf{poly}(n)$ \\
\midrule
\begin{tabular}{c}
Classical memory 
\end{tabular}
& $D$ & ${\mathsf{poly}(n)}$ & $D^{1/3}$ & ${\mathsf{poly}(n)}$ & ${\mathsf{poly}(n)}$ \\
\midrule
\begin{tabular}{c}
Balanced point \\ of $D$ and $T$
\end{tabular} &
$2^{n/2}$ & - & $2^{3n/7}$ & - &  $2^{n/3}$ \\
\bottomrule
\end{tabular}
\medskip
\caption{Tradeoffs for Q1 attacks on the Even-Mansour construction.
In this table we omit to write order notations, and ignore polynomial factors in the first and last rows.}
\label{table:tradeoffEM}
\end{table}

\subsubsection{Application of $\AlgQone$.}
We explain how to use our algorithm $\AlgQone$ to attack the Even-Mansour construction. The tradeoff that we obtain is  $T^2 \cdot D = 2^{n}$, the same as the attack by Kuwakado and Morii above. It balances at $T=D=2^{n/3}$, but we use only $\mathsf{poly}(n)$ qubits and $\mathsf{poly}(n)$ classical memory.
See Table~\ref{table:tradeoffEM} for comparison of attack complexities under the condition that $\mathsf{poly}(n)$ many qubits are available.
\paragraph{Attack idea.}

The core observation of Kuwakado and Morii's polynomial-time attack in the Q2 model~\cite{DBLP:conf/isita/KuwakadoM12} is that the $n$-bit secret key $k_1$ is the period of the function $E_{k_1,k_2}(x) \oplus P(x)$, and thus Simon's algorithm can be applied if quantum queries to $E_{k_1,k_2}$ are allowed. The key to this exponential speed up (compared to the classical attack) is to exploit the algebraic structure of $E_{k_1,k_2}$ (the hidden period of the function) with Simon's algorithm.

On the other hand, the previous Q1 (classical query) attacks described above use only generic multi-target preimage search algorithms that do not exploit any algebraic structures. Hence being able to exploit the algebraic structure in the Q1 model should give us some advantage.

Our algorithm $\AlgQone$ realizes this idea. It first makes classical online queries to emulate the quantum queries required by Simon's algorithm (the g-database above) and then runs a combination of Simon's and Grover's algorithms offline (Grover search is used to find the additional $m$-bit secret information). A naive way to attack in the Q1 model would be to immediately combine Kuwakado and Morii's Q2 attack with $\AlgQone$. However, we would have to query the whole classical codebook to emulate quantum queries, which is too costly (and there is no Grover search step).

Our new attack is as follows:
We divide the $n$-bit key $k_1$ in $k^{(1)}_1$ of $u$ bits and $k^{(2)}_1$ of ${n-u}$ bits and apply Simon's algorithm to recover $k^{(1)}_1$, while we guess $k^{(2)}_1$ by the Grover search (see Fig.~\ref{fig:Q1EM}).
Then, roughly speaking, $\AlgQone$ recovers the key by making $D=2^u$ classical queries and $T=2^{(n-u)/2}$ offline Grover search iterations (note that the offline computation cost for Simon's algorithm is $\mathsf{poly}(n)$ and we ignore polynomial factors here for simplicity), which yields the tradeoff $D \cdot T^2 = 2^n$, only with $\mathsf{poly}(n)$ qubits and $\mathsf{poly}(n)$ classical space.

\usetikzlibrary{arrows}
\renewcommand{\midsection}{\tikz \draw[-] (-0.1,-0.1) -- (0.1,0.1);}
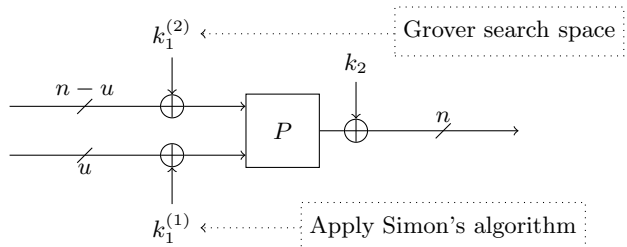
\begin{figure}
\centering
\begin{tikzpicture}[node distance=2em]
  \draw
  node at (0,0)[block,name=p,minimum size=3em] {$P$}
  node at (p.west) [name=pleft] {}
  node [name=pup, above of=pleft, node distance=1em] {}
  node [xor, name=xorup,circle,left of=pup,node distance=3em] {}
  node [l, above of=xorup,name=k12, above] {$k_1^{(2)}$}
  node [name=pdown, below of=pleft, node distance=1em] {}
  node [xor, name=xordown, circle, left of=pdown, node distance=3em] {}
  node [l, below of=xordown, name=k11,below] {$k_1^{(1)}$}
  node [left of=xorup, node distance = 7em,name=startup] {}
  node [left of=xordown, node distance = 7em,name=startdown] {};
  \draw[->] (xorup) -- (pup.center);
  \draw[->] (xordown) -- (pdown.center);
  \draw[->] (k12) -- (xorup);
  \draw[->] (k11) -- (xordown);
  \draw (startup) -- node {\midsection} node[above,node distance=1.5em] {$n-u$} (xorup);
  \draw (startdown) -- node {\midsection} node[below,node distance=2em] {$u$} (xordown);
  \draw
  node [xor, circle,name=xorend, right of=p, node distance=3em] {}
  node [l, above of=xorend,name=k2,above] {$k_2$}
  node [right of=xorend, node distance=7em, name=end] {};
  \draw (p) -- (xorend);
  \draw[->] (k2) -- (xorend);
  \draw[->] (xorend) -- node {\midsection} node[name=n,above, node distance=1.5em] {$n$} (end);
  
  \draw 
  node [block, dotted,name=grover, above of=n, node distance=3.5em,right=-2em] {Grover search space}
  node [block, dotted,name=simon, below of=n, node distance=4.5em] {Apply Simon's algorithm};
  \draw[dotted, ->] (grover.west|-k12) -- (k12);
  \draw[dotted,->] (simon.west|-k11) -- (k11);
  
 \end{tikzpicture}
\caption{Idea of our Q1 attack on the Even-Mansour construction.}
\label{fig:Q1EM}
\end{figure}
\paragraph{Attack description.}
Here we give the description of our Q1 attack.
Let $u$ be an integer such that $0 \leq u \leq n$.
Define $F : \{0,1\}^{n-u} \times \{0,1\}^u \rightarrow \{0,1\}^n$ by
\begin{equation}
F(i,x) = P(x\|i),
\end{equation}
and define $g : \{0,1\}^u \rightarrow \{0,1\}^n$ by $g(x) = E_{k_1,k_2}(x\|0^{n-u}).$

Note that $F(k^{(2)}_1,x) \oplus g(x)$ has the period $k^{(1)}_1$ since $F(k^{(2)}_1,x) \oplus g(x) = P(x\|k^{(2)}_1) \oplus P((x\oplus k^{(1)}_1) \| k^{(2)}_1) \oplus k_2$.
Our attack is described as the following procedure:
\begin{enumerate}
\item Run $\AlgQone$ for the above $F$ and $g$ to recover $k^{(2)}_1$.
\item Recover $k^{(1)}_1$ by applying $\SimQone$ to $f_{k^{(2)}_1}$ and $g$.
\item Compute $k_2 = E_{k_1,k_2}(0^n) \oplus P(k_1)$.
\end{enumerate}

\paragraph{Analysis.}
Below we assume that $u$ is not too small, \emph{e.g.}, $u \geq n / \log_2 n$.
This assumption is not an essential restriction since, if $u$ is too small, then the complexity of the first step becomes almost the same as the Grover search on $k_1$, which is not of interest.

If $P$ is a random permutation, we can assume that $f_i \oplus g = P( \cdot || i) \oplus E_{k_1,k_2}(\cdot || 0^{n-u})$ is far from periodic for all $i \neq k^{(2)}_1$, and that assumption~\eqref{eq:simoncondition} in Problem~\ref{problem:search-shift} holds.

Hence, by Proposition~\ref{prop:AlgQ1}, $\AlgQone$ in Step 1 recovers $k^{(2)}_1$ with a high probability by making $\bigO{2^u}$ classical queries to $g$ and the offline computation of $\AlgQone$ runs in time $\bigO{n^32^{(n-u)/2}}$.
Here, notice that one evaluation of $g$ (resp. $F$) can be done in $\bigO{1}$ evaluations of $E_{k_1,k_2}$ (resp. $P$).
In addition, from Proposition~\ref{prop:SimQ1}, $\SimQone$ in Step 2 recovers $k^{(1)}_1$ with a high probability by making $\bigO{2^u}$ classical queries to $g$ and the offline computation of $\AlgQone$ runs in time $\bigO{n^3}$.
Step 3 requires $\bigO{1}$ queries to $E_{k_1,k_2}$ and $\bigO{1}$ offline computations.

In summary, our attack recovers keys of the Even-Mansour construction with a high probability by making $D=\bigO{2^u}$ classical queries to $E_{k_1,k_2}$ and doing $T=\bigO{n^3 2^{(n-u)/2}}$ offline computations, which balances at $T = D = \tilde{\mathcal{O}}(2^{n/3})$.
By construction of $\AlgQone$ and $\SimQone$, our attack uses $\mathsf{poly}(n)$ qubits and $\mathsf{poly}(n)$ classical memory.

\subsubsection{Applications to concrete instances.}
The Even-Mansour construction is a commonly used cryptographic construction. The masks used in Even-Mansour are often derived from a smaller key, which can make a direct key-recovery using Grover's algorithm more efficient. This is for example the case in the CAESAR candidate Minalpher~\cite{Minalpher}. In general, we need to have a secret key of at least two thirds of the state size for our attack to beat the exhaustive search.

The Farfalle construction~\cite{DBLP:journals/tosc/BertoniDHPAK17} degenerates to an Even-Mansour construction if the input message is only 1 block long. Instances of this construction use variable states and key sizes. The Kravatte instance~\cite{DBLP:journals/tosc/BertoniDHPAK17} has a state size of 1600 bits, and a key size between 256 and 320 bits, which leads to an attack at a whopping cost of $2^{533}$ data and time, while the direct key exhaustive seach would cost at most $2^{160}$. Xoofff~\cite{DBLP:journals/tosc/DaemenHAK18} has a state size of 384 bits and a key size between 192 and 384 bits. Our attack needs $2^{128}$ data, which is exactly the data limit of Xoofff. Hence, it is relevant if the key size is greater than 256.

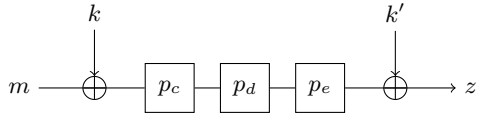
\begin{figure}[h]
  \centering
 \begin{tikzpicture}
  \draw
  node at (0,0)[l,name=m] {$m$}
  node [xor,circle, name=xor1, right of=m] {}
  node [block,name=f1, right of=xor1] {$p_c$}
  node [block,name=f2, right of=f1] {$p_d$}
  node [block,name=f3, right of=f2] {$p_e$}
  node [xor,circle, name=xor2, right of=f3] {}
  node [name=out, right of=xor2] {$z$}
  node [name=k1, above of=xor1] {$k$}
  node [name=k2, above of=xor2] {$k'$};
  \draw[->] (m) -- (f1) -- (f2) -- (f3) -- (out);
  \draw[->] (k1) -- (xor1);
  \draw[->] (k2) -- (xor2);
 \end{tikzpicture}
\caption{One-block Farfalle.}
\end{figure}

\subsection{Tradeoffs for the FX Construction}
The FX construction~\cite{DBLP:conf/crypto/KilianR96}  $\mathrm{FX}_{k,k_{in},k_{out}}$,  computes a ciphertext $c$ from a plaintext $p$ by $c \leftarrow E_{k}(p \oplus k_{in}) \oplus k_{out}$, where $E$ is a block cipher, $k$ is an $m$-bit key and $k_{in},k_{out}$ are two $n$-bit keys.
In the classical setting, there exists a classical attack with tradeoff $TD=2^{n+m}$, which balances at $T=D=2^{(n+m)/2}$ (see, for example, \cite{DBLP:conf/eurocrypt/Dinur15} for more details and memory trade-offs).

\subsubsection{Previous Q1 attacks on the FX construction.} 

Applying Grover as we did before on Even-Mansour on the keys $k_{in}$ and $k$, we can recover the keys with only two pairs of plaintext-ciphertext and a time complexity of $2^{(n+m)/2}$, while only needing a polynomial number of qubits. 

In~\cite{DBLP:conf/ctrsa/HosoyamadaS18}, Hosoyamada and Sasaki proposed a tradeoff $D \cdot T^6=2^{3(n+m)}$ for $D \leq \min\{2^n,2^{3(n+m)/7}\}$ with a polynomial amount of qubits, by using the multi-target preimage search by Chailloux et al~\cite{DBLP:conf/asiacrypt/ChaillouxNS17}.
The balance occurs at $D=T=2^{3(n+m)/7}$ (if $m \leq 4n/3$), which is smaller than the classical balanced point $2^{(n+m)/2}$.
The attack requires $M=D^{1/3}$ classical memory, thus the attack still requires exponentially large space at the balanced point.
This was the only Q1 attack with time $T \ll 2^{(n+m)/2}$ and a polynomial amount of qubits.

\subsubsection{Application of $\AlgQone$.}
We explain how to apply our algorithm $\AlgQone$ to the FX construction.
Our new tradoff is $T^2 \cdot D = 2^{n+m}$ for $D \leq 2^n$, which balances at $T=D=2^{(n+m)/3}$ (for $m \leq 2n$), using only $\mathsf{poly}(n)$ qubits and $\mathsf{poly}(n)$ classical memory.
See Table~\ref{table:tradeoffFX} for comparison of attack complexities under the condition that only $\mathsf{poly}(n)$ qubits are available.
\begin{table}[htb]
\centering
\begin{tabular}{|c||c|c|c|c|c}
\hline
Reference &
Classical attack & Grover & \cite{DBLP:conf/ctrsa/HosoyamadaS18} & \bf{[Ours]} \\
\toprule
\begin{tabular}{c}
Tradeoff of \\ $D$ and $T$
\end{tabular}
&
\begin{tabular}{c}
\\
$D \cdot T=2^{n+m}$ \\
$(D \leq 2^n)$
\end{tabular}
&
\begin{tabular}{c}
\\
$T=2^{(n+m)/2}$ \\
 $D =\mathrm{constant}$
\end{tabular}
&
\begin{tabular}{c}
\\
$D \cdot T^6=2^{3(n+m)}$ \\
$(D \leq \min\{2^n,2^{3n/7}\})$
\end{tabular}
&
\begin{tabular}{c}
\\
$D \cdot T^2=2^{n+m}$ \\
$(D \leq 2^n)$
\end{tabular} \\
\midrule
\begin{tabular}{c}
Num. of qubits
\end{tabular}
& - & $\mathsf{poly}(n)$ & $\mathsf{poly}(n)$ & $\mathsf{poly}(n)$ \\
\midrule
\begin{tabular}{c}
Class. memory
\end{tabular}
& $D$  & ${\mathsf{poly}(n)}$ & $D^{1/3}$ & ${\mathsf{poly}(n)}$ \\
\midrule
\begin{tabular}{c}
Balanced point \\ of $D$ and $T$
\end{tabular}
&
\begin{tabular}{c}
$2^{(n+m)/2}$ \\ ($m \leq n$)
\end{tabular}
& -&
\begin{tabular}{c}
$2^{3(n+m)/7}$ \\ ($m \leq 4n/3$)
\end{tabular}
& 
\begin{tabular}{c}
$2^{(n+m)/3}$ \\ ($m \leq 2n$)
\end{tabular}\\
\bottomrule
\end{tabular}
\medskip
\caption{Tradeoffs for Q1 attacks on the FX construction.
In this table we omit to write order notations, and ignore polynomial factors in the first and last rows.
}
\label{table:tradeoffFX}
\end{table}
\paragraph{Attack idea.}
Recall that, in the Q1 attack on the Even-Mansour construction in Section~\ref{sec:EMQ1}, we divided the first key $k_1$ to two parts $k^{(1)}_1$ and $k^{(2)}_1$ and applied Simon's algorithm to $k^{(1)}_1$ while we performed Grover search on $k^{(2)}_1$.

In a similar manner, for the FX construction $\FX_{k,k_{in},k_{out}}$ we divide the $n$-bit key $k_{in}$ in $k^{(1)}_{in}$ of $u$ bits and $k^{(2)}_{in}$ of $(n-u)$ bits. We apply Simon's algorithm to recover $k^{(1)}_{in}$ while we perform Grover search on $k$ in addition to $k^{(2)}_{in}$ (see Fig.~\ref{fig:Q1FX}).

\begin{figure}[htb]
\centering
\begin{tikzpicture}[node distance=2em]
  \draw
  node at (0,0)[block,name=p,minimum size=3em] {$E$}
  node at (p.west) [name=pleft] {}
  node [name=pup, above of=pleft, node distance=1em] {}
  node [xor, name=xorup,circle,left of=pup,node distance=3em] {}
  node [l, above of=xorup,name=k12,above] {$k_{in}^{(2)}$}
  node [name=pdown, below of=pleft, node distance=1em] {}
  node [xor, name=xordown, circle, left of=pdown, node distance=3em] {}
  node [l, below of=xordown, name=k11, below] {$k_{in}^{(1)}$}
  node [left of=xorup, node distance = 7em,name=startup] {}
  node [left of=xordown, node distance = 7em,name=startdown] {};
  \draw[->] (xorup) -- (pup.center);
  \draw[->] (k12) -- (xorup);
  \draw[->] (xordown) -- (pdown.center);
  \draw[->] (k11) -- (xordown);
  \draw (startup) -- node {\midsection} node[above,node distance=1.5em] {$n-u$} (xorup);
  \draw (startdown) -- node {\midsection} node[below,node distance=2em] {$u$} (xordown);
  \draw
  node [xor, circle,name=xorend, right of=p, node distance=3em] {}
  node [l, above of=xorend,name=k2,above] {$k_{out}$}
  node [right of=xorend, node distance=7em, name=end] {};
  \draw[->] (p) -- (xorend)-- node {\midsection} node[name=n,above, node distance=1.5em] {$n$} (end);
  \draw[->] (k2) -- (xorend);
  \draw 
  node [block, dotted,name=grover, above of=n, node distance=6em] {Grover search space}
  node [block, dotted,name=simon, below of=n, node distance=4.5em] {Apply Simon's algorithm};

  \draw[dotted,->] (simon.west|-k11) -- (k11);
  \draw[fill=black,draw opacity=0] (p.north west) rectangle ([yshift=-0.15cm]p.north east);
  \draw  
  node [l,name=k,above of=p, node distance=4.5em] {$k$};
  \draw[->] (k) -- node {\midsection} node [name=m,right,node distance=1.5em] {$m$} (p);
  \draw  node [name=tmp,above of=k] {};
  \draw[dotted, ->] (grover) -| (k12);
  \draw[dotted, ->] (grover) |- (k);
 \end{tikzpicture}
\caption{Idea of our Q1 attack on the FX construction.}
\label{fig:Q1FX}
\end{figure}
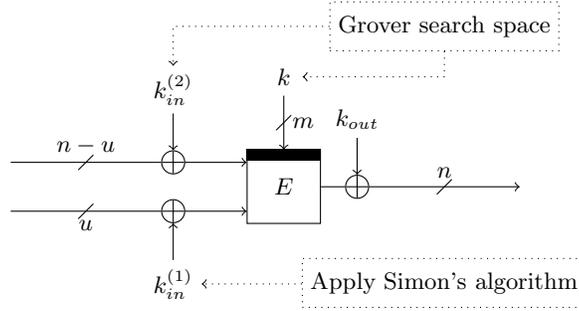

Then, roughly speaking, by applying $\AlgQone$ we can recover the key by making $D=2^u$ classical queries and $T=2^{(n-u)/2}$ offline Grover iterations (note that the offline computation cost for the Simon's algorithm is $\mathsf{poly}(n)$ and we ignore polynomial factors here for simplicity), which yields the tradeoff $D \cdot T^2 = 2^{(n+m)}$ for $D \leq 2^n$, with only $\mathsf{poly}(n)$ qubits and $\mathsf{poly}(n)$ classical memories.

\paragraph{Attack description.}
Here we give the description of our Q1 attack.
Let $u$ be an integer such that $0 \leq u \leq n$.
Define $F : \{0,1\}^{m+(n-u)} \times \{0,1\}^u \rightarrow \{0,1\}^n$ by
\begin{equation}
F(i\|j,x) = E_i(x\|j) (i \in \{0,1\}^{m}, j \in\{0,1\}^{n-u}),
\end{equation}
and define $g : \{0,1\}^u \rightarrow \{0,1\}^n$ by $g(x) = \FX_{k,k_{in},k_{out}}(x\|0^{n-u}).$

Note that $F(k \| k^{(2)}_{in},x) \oplus g(x)$ has the period $k^{(1)}_{in}$ since $F(k \| k^{(2)}_{in},x) \oplus g(x) = E_k(x\|k^{(2)}_{in}) \oplus E_k((x\oplus k^{(1)}_{in}) \| k^{(2)}_{in}) \oplus k_{out}$.
Our attack procedure runs as follows:
\begin{enumerate}
\item Run $\AlgQone$ for the above $F$ and $g$ to recover $k$ and $k^{(2)}_{in}$.
\item Recover $k^{(1)}_{in}$ by applying $\SimQone$ to $f_{k \| k^{(2)}_{in}}$ and $g$.
\item Compute $k_{out} = \FX_{k,k_{in},k_{out}}(0^n) \oplus E_k(k_{in})$.
\end{enumerate}

\paragraph{Analysis.}
We assume that $m = \bigO{n}$, which is the case for usual block ciphers.
In the same way as in the analysis for the attack on the Even-Mansour construction in Section~\ref{sec:EMQ1}, if $E$ is a (pseudo) random permutation family and $u$ is not too small (\emph{e.g.} $u \geq n / \log_2 n$), we observe that the assumption~\eqref{eq:simoncondition}, rephrased as:
\begin{align}\label{eq:AssumptionQ1FXequiv}
&\max_{t \in \{0,1\}^n \setminus \{0^n\}} \Pr_{x \in \{0,1\}^n} \left[ E_i\left(x||j\right) \oplus E_i\left( x \oplus t || j \right) \oplus E_k\left( x \oplus k^{(1)}_{in} ||  k^{(2)}_{in} \right) \right. \nonumber \\
& \hspace{100pt} \left. \oplus E_k\left( x \oplus t \oplus k^{(1)}_{in} || k^{(2)}_{in} \right) = 0 \right] \leq 1/2
\end{align}
holds for all $(i,j) \neq (k,k^{(2)}_1)$ with overwhelming probability.
This again implies that the claims of Propositions~\ref{prop:AlgQ1} and~\ref{prop:SimQ1} hold for $\AlgQone$ in Step 1 and $\SimQone$ in Step 2, respectively.

Thus our attack recovers keys of the FX construction with a high probability by making $D=\bigO{2^u}$ classical queries to $\FX_{k,k_{in},k_{out}}$ and doing $T=\bigO{n^32^{(m+n-u)/2}}$ offline computations for $D \leq 2^n$, which balances at $T = D = \tilde{\mathcal{O}}(2^{(n+m)/3})$ if $m \leq 2n$.
Our attack uses only $\mathsf{poly}(n)$ qubits and $\mathsf{poly}(n)$ classical memory by construction of $\AlgQone$ and $\SimQone$.

\subsubsection{Application to concrete instances.} 
DESX~\cite{DBLP:conf/crypto/KilianR96} has a 64-bit state, two 64-bit whitening key and one
56-bit inner key. From Propositions~\ref{prop:AlgQ2} and~\ref{prop:cval}, we can estimate that our attack needs roughly $2^{42}$ classical queries and $2^{40}$ quantum computations of the cipher circuit. 

PRINCE~\cite{DBLP:conf/asiacrypt/BorghoffCGKKKLNPRRTY12}, and PRIDE~\cite{DBLP:conf/crypto/AlbrechtDKLPY14} are two ciphers using the FX construction with a 64-bit state, a 64-bit inner key and two 64-bit whitening keys. Hence, from Propositions~\ref{prop:AlgQ2} and~\ref{prop:cval}, we can estimate that our attack needs roughly $2^{45}$ quantum queries and $2^{43}$ quantum computations of the cipher circuit.

We can also see some encryption modes as an instance of the FX construction. This is for example the case of the XTS mode~\cite{DBLP:journals/ieeesp/Martin10}, popular for disk encryption. It is generally used with AES-256 and two whitening keys that
depend on the block number and another 256-bit key. Hence, with the full codebook of one block, we can obtain the first key and the value of the whitening keys of the corresponding block. Once the first key is known, the second can easily be brute-forced from a few known plaintext-ciphertext couples in other blocks.

Adiantum~\cite{DBLP:journals/tosc/CrowleyB18} is another mode for disk encryption that uses a variant of the FX construction with AES-256 and Chacha. There is however one slight difference: the FX masking keys are added with a modular addition instead of a xor. The FX construction is still vulnerable~\cite{DBLP:conf/asiacrypt/BonnetainN18}, but we will need to use Kuperberg's algorithm~\cite{DBLP:conf/tqc/Kuperberg13} instead of Simon's algorthm.  As before, with the full codebook on one block, we can recover the AES and Chacha keys in a time slightly larger than $2^{256}$.

\subsection{Other Applications}

\subsubsection{Chaskey.} The lightweight MAC Chaskey~\cite{DBLP:conf/sacrypt/MouhaMHWPV14} is very close to an Even-Mansour construction (see Figure~\ref{fig:chaskey}). Since the last message block ($m_2$ in Figure~\ref{fig:chaskey}) is XORed to the key $K_1$, we can immediately apply our Q1 attack and recover $K_1$ and the value of the state before the xoring of the last message block. As $\pi$ is a permutation easy to invert, this allows to retrieve $K$. The Chaskey round function applies on 128 bits. It contains 16 rounds with 4 modular additions on 32 bits, 4 XORs on 32 bits and some rotations. With a data limit of $2^{48}$, as advocated in the specification, our attack would require roughly $2^{(128 - 48)/2} \times 2^{19} = 2^{59}$ quantum gates, where the dominant cost is solving the 80-dimensional linear system inside each iteration of Grover's algorithm.

\begin{figure}
  \centering
 \begin{tikzpicture}
  \draw
  node at (0,0)[l,name=m] {$K$}
  node [xor,circle, name=xor1, right of=m] {}
  node [sponge,name=f1, right of=xor1] {$\pi$}
  node [xor,circle, name=xor2, right of=f1] {}
  node [xor,circle, name=xor3, right of=xor2] {}
  node [sponge,name=f2, right of=xor3] {$\pi$}
  node [xor,circle, name=xor4, right of=f2] {}
  node [name=k1, above of=xor1] {$m_1$}
  node [name=k2, above of=xor2] {$m_2$}
  node [name=k3, above of=xor3] {$K_1$}
  node [name=k4, above of=xor4] {$K_1$}
  node [block, name=trunk, right of=xor4] {$\text{Trunk}_t$}
  node [name=out, right of=trunk, right] {Tag};
  \draw[->] (m) -- node {\midsection} node [above=0.2em] {$128$} (xor1) -- (f1) -- (f2) -- (trunk) -- (out);
  \draw[->] (k1) -- (xor1);
  \draw[->] (k2) -- (xor2);
  \draw[->] (k3) -- (xor3);
  \draw[->] (k4) -- (xor4);
 \end{tikzpicture}
\caption{Two-block Chaskey example.}
\label{fig:chaskey}
\end{figure}
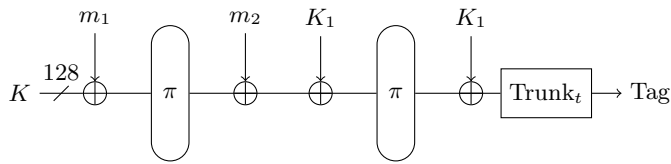

\subsubsection{Sponges}

Our attack can be used on sponges if there is an input injected on a fixed state. In general, it has two drawbacks: the nonce has to be fixed, and the cost of the attack is at least $2^{c/2}$ with $c$ the capacity of the sponge, which is often
the classical security parameter. However, there are some cases where our attack is of interest.

In particular, our attack needs a set of values that contains an affine space. If a nonce was injected the same way the messages are, then we only need to know the encryptions of identical messages, with a set of nonces that fills an affine space. Nonce-respecting
adversaries are generally allowed to choose the nonce, but here, the mere assumption that the nonce is incremented for each message (which is the standard way nonces are processed in practice) is sufficient: A set of $2^k$ consecutive values contains an affine space
of $(\mathbb{Z}/(2))^{k-1}$.

This is the case in the Beetle mode of lightweight authenticated encryption~\cite{DBLP:journals/tches/ChakrabortiDNY18}, whose initialization phase is described as $(K_1 \oplus N) \| K_2 \mapsto f((K_1 \oplus N) \| K_2)$, where $K_1, N \in \{0,1\}^r$, $K_2 \in \{0,1\}^c$, and $f$ is a $(r+c)$-bit permutation.

Here, the nonce is directly added to the key $K_1$, but as the key has the same length as the state, the attack would still work if the nonce was added after the first permutation.
In Beetle[Light+], the rate is $r = 64$ bits and the capacity $c = 80$ bits. The rate is sufficiently large to embed 48 varying bits for the nonce; in that case, by making $2^{48}$ classical queries and $2^{48}$ Grover iterations, we can recover the secret $K_1 || K_2$. In Beetle[Secure+], $r = c = 128$ bits. We can recover $K_1 || K_2$ with $2^{85}$ messages and Grover iterations.

\begin{figure}
 \centering
\begin{tikzpicture}
 \draw
 node at (0,0)[l,name=kmid] {}
 node [sponge,name=f1, right of=kmid,node distance=6em] {$f$}
 node [name=f1in,left of=f1, node distance=1em] {}
 node [l,name=k1, above of=kmid, node distance=1em] {$K_1 \oplus N$}
 node [l,name=k2, below of=kmid, node distance=1em] {$K_2$}
 node [name=f1out, right of=f1,node distance=1em] {}
 node [name=outup, above of=f1out, node distance=1em] {}
 node [name=outdown, below of=f1out, node distance=1em] {}
 node [name=rate, right of=outup,node distance=5em] {}
 node [name=capacity, right of=outdown, node distance=5em] {};
 \draw[->] (k1) -- (k1-|f1.west);
 \draw[->] (k2) -- (k2-|f1.west);
 \draw[->] (f1.east|-rate) -- node {\midsection} node[above, node distance=1.5em] {$r$} (rate);
 \draw[->] (f1.east|-capacity) -- node {\midsection} node[below, node distance=1.5em] {$c$} (capacity);
\end{tikzpicture}
\caption{Beetle state initialization.}
\label{fig:beetle}
\end{figure}
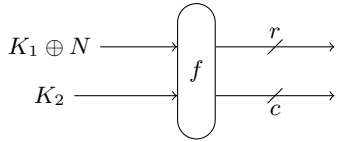

\section{Discussion}\label{sec:discussion}

In this section, we discuss on the application of our attack idea to related-key attacks, to some slide attacks, and to an extension of Problem~\ref{problem:search-shift}.
See also Section~B in the \acversion{full version of the paper~\cite{DBLP:journals/iacr/BonnetainHNSS19}}\fullversion{Appendix} for discussions on adaptive attacks and non-adaptive attacks.

\subsection{Related Keys}

Consider a block cipher $E_k$ with a key and block size of $n$ bits. In the related-key setting, as introduced in~\cite{DBLP:journals/cryptologia/WinternitzH87}, we are not only allowed to make chosen plaintext or ciphertext queries to a secret-key oracle hiding $k$, but also to query $E_{k \oplus \ell} (m)$ for any $n$-bit difference $\ell$ and message $m$. Classically, this is a very powerful model, but it becomes especially meaningful when the block cipher is used inside a mode of operation (\emph{e.g.} a hash function) in which key differences can be controlled by the attacker. It is shown in~\cite{DBLP:journals/cryptologia/WinternitzH87} that a secret key recovery in this model can be performed in $2^{n/2}$ operations, as it amounts to find a collision between some query $E_{k\oplus \ell} (m)$ and some offline computation $E_{\ell'}(m)$ (we can use more than a single plaintext $m$ to ensure an overwhelming success probability).

R{\"{o}}tteler and Steinwandt~\cite{DBLP:journals/ipl/RottelerS15} noticed that, if a quantum adversary has \emph{superposition} access to the oracle that maps $\ell$ to $E_{k \oplus \ell} (m)$, it can mount a key-recovery in polynomial time using Simon's algorithm. Indeed, one can define a function:
\[
f(x) = E_{k \oplus x}(m) \oplus E_x(m)
\]
which has $k$ as hidden period, apply Simon's algorithm and recover $k$. This attack works for any block cipher, even ideal. In contrast, in the Q2 quantum attacker model, we know that some constructions are broken, but it does not seem to be the case for all of them.

With our algorithm $\AlgQone$, we are able to translate this related-key superposition attack into an attack where the related-key oracle is queried only classically, but the attacker has quantum computing power. We write $k = k_1 || k_2$ where $k_1$ has $n/3$ bits and $k_2$ has $2n/3$ bits. We query $E_{(k_1 || k_2) \oplus (\ell_1 || 0)} (m)$ for a fixed $m$ and all $n/3$-bit differences $\ell_1$. Then we perform a Grover search on $k_2$. The classical security level in presence of a related-key oracle of this form, which is $2^{n/2}$, is reduced quantumly to $2^{n/3}$. This shows that the transition to a quantum setting has an impact on the related-key security even if the oracle remains classical.

As a consequence, we could complete the security claims of the $16$-round version of the block cipher \textsc{Saturnin}~\cite{saturnin}, a submission to the ongoing NIST lightweight cryptography competition\footnote{\url{https://csrc.nist.gov/Projects/Lightweight-Cryptography}}. The authors of \textsc{Saturnin} gave security claims against quantum attackers meeting the best generic attacks. No claims were given regarding the $Q_1$ model for related-key attacks.  Our result gives the best generic quantum related-key attack on ideal block ciphers without superposition queries, and sets the level of security that should be expected from a block cipher in this setting: the key can be recovered in quantum time $\bigOt{2^{n/3}}$ for a block cipher of $n$ bits (and using $2^{n/3}$ classical related-key queries). The corresponding security level for \textsc{Saturnin}$_{16}$, which has blocks of $256$ bits, lies at $2^{256/3} = 2^{85}$: we can say that in the $Q_1$ related-key setting, \textsc{Saturnin}$_{16}$ should have no attack with time complexity lower than $2^{85}$.


\subsection{Slide Attacks}

Quantum slide attacks are a very efficient quantum counterpart of the classical slide attacks~\cite{DBLP:conf/fse/BiryukovW99}. They have been introduced in~\cite{DBLP:conf/crypto/KaplanLLN16}, with a polynomial-time attack on 1-round self-similar ciphers. In many cases, our algorithm does not improve these attacks, because they are already too efficient and do not rely on a partial exhaustive search. Still, some of them use a partial exhaustive search. This is the case of the slide attack against 2 round self-similar  ciphers of~\cite{DBLP:conf/asiacrypt/Leander017} and the slide attacks against whitened Feistels of~\cite{DBLP:journals/iacr/BonnetainNS18}. 

For example, we can see a 2 round self-similar cipher as an example of iterated FX cipher, as in Figure~\ref{fig:iFX}.
Define functions $p_i$, $F_i$, and $g$ as
\[
  p_i((b,x))=\left\{\begin{matrix} (0,E_i(x)) &\text{if $b = 0$} \\ (1,x) & \text{if $b = 1$}\end{matrix}\right.,\quad F_i((b,x), y) = \left\{\begin{matrix} y\oplus x &\text{if $b = 0$} \\ E_i(y) \oplus x & \text{if $b = 1$}\end{matrix}\right.,
\]
and $g((b,x)) = \text{iFX}(x)$. We have the property that $\text{iFX}(E_{k_2}(x\oplus k_1))\oplus (x\oplus k_1) = E_{k_2}(\text{iFX}(x))\oplus x$. Hence, we have the hidden period $(1,k_1)$ in the function $f_{k_2}((b,x)) = F_{k_2}\left((b,x),g(p_{k_2}(b,x))\right)$. 
To apply our attack, we need to compute
$\sum_{x,b}\ket{x}\ket{b}\ket{f_i((b,x))}$ from the state $\sum_x\ket{x}\ket{\text{iFX}(x)}$. We first need to add one qubit to obtain $\sum_x\ket{x}(\ket0+\ket1)\ket{\text{iFX}(x)}$. Then, conditioned on the second register to be 0, we transform $x$ into $E_i^{-1}(x)$. Next, conditioned on the second register to be 1, we transform $\text{iFX}(x)$ into $E_i(\text{iFX}(x))$. Finally, we add the first register to the third. Hence, we can apply our attack, and retrieve $k_1$ and $k_2$ using
$\bigO{|k_1|}$ queries and $\bigO{|k_1|^32^{|k_2|/2}}$ time, assuming $|k_1| = \Omega(|k_2|)$.
\begin{figure}[h]
  \centering
 \begin{tikzpicture}
  \draw
  node at (0,0)[l,name=m] {$m$}
  node [xor,circle, name=xor1, right of=m] {}
  node [block,name=f1, right of=xor1] {$E_{k_2}$}
  node [xor,circle, name=xor2, right of=f1] {}
  node [name=d, right of=xor2] {...}
  node [xor,circle, name=xor3, right of=d] {}
  node [block,name=f2, right of=xor3] {$E_{k_2}$}
  node [xor,circle, name=xor4, right of=f2] {}
  node [name=k1, above of=xor1] {$k_1$}
  node [name=k2, above of=xor2] {$k_1$}
  node [name=k3, above of=xor3] {$k_1$}
  node [name=k4, above of=xor4] {$k_1$}
  node [name=out, right of=xor4, right] {$\text{iFX}(m)$};
  \draw[->] (m) --  (xor1) -- (f1) -- (d) -- (f2) -- (out);
  \draw[->] (k1) -- (xor1);
  \draw[->] (k2) -- (xor2);
  \draw[->] (k3) -- (xor3);
  \draw[->] (k4) -- (xor4);
 \end{tikzpicture}
\caption{Iterated-FX cipher.}
\label{fig:iFX}
\end{figure}
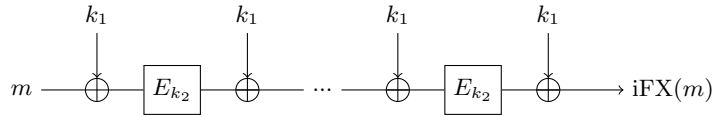

The above problem of recovering keys can be generalized as the following problem, which can be solved by the same strategy as above.
\begin{problem}[Constructing and Finding a Hidden Period]\label{problem:asymmetric-shift}
Let $g~:\zo^n \rightarrow \zo^\ell$ be a function, $i\in I$,  $p_i : \zo^n \to \zo^n$ be a permutation and $F_i : \zo^n \times \zo^\ell \to \zo^\ell$  be a function such that $F_i(x,\cdot)$ is a permutation.
Assume that there exists $i_0 \in I$ such that $f_{i_0}(x) = F_{i_0}\left(x,g(p_{i_0}(x))\right)$ has a period, \emph{i.e.}: $\forall x \in \zo^n, f_{i_0}(x) = f_{i_0}(x\oplus s)$ for some $s$. Assume that we are given quantum oracle access to $F_i$ and $p_i$ and
classical or quantum oracle access to $g$.
(In the Q1 setting, $g$ will be a classical oracle. In the Q2 setting, $g$ will be a quantum oracle.)
Then find $i_0$ and $s$.
\end{problem} 

This problem assumes that $g$ is a keyed function, and that we can reversibly transform $(x,g(x))$ into a couple $(y,f_i(y))$, with $f_i$ a periodic function if $i = i_0$. We can see this transformation as a generalization of the CCZ equivalence~\cite{add:CarChaZin98}, where the function mapping the graph
of $g$ and the graph of $f_i$ do not need to be an affine function.
There may also be more than one solution (in which case we just want to find one), or there may be none, just as Grover's algorithm can handle cases with many expected solutions, or distinguish whether there is a solution or not.
Note that Problem~\ref{problem:search-shift} is a special case of the above problem, in the case where $p_i$ is the identity, and $F_i$ is only the xoring of $g$ and another function.

\section{Conclusion}\label{sec:conclusion}

In this paper, we have introduced a new quantum algorithm, in which we make use of Simon's algorithm in an \emph{offline} way. The idea of making $\mathsf{poly}(n)$ superposition queries to the oracle (with, as input, a uniform superposition), storing them as some compressed database on $n^2$ qubits, and reusing them during the iterations of a Grover search, yielded surprising results. This idea, initially targeting the query complexity of some Q2 attacks on cryptographic schemes, enabled us to find new quantum-time/classical-data tradeoffs.
Our result has three consequences, each of which answers a long-standing question in post-quantum cryptography.
\paragraph{Simon's Algorithm can be Used in An Offline Setting.}
We provided the first example of use of Simon's algorithm (or more precisely, its core idea) in an offline setting, without quantum oracle queries.
\paragraph{Improving More than the Time Complexity.}
Consider the example of our attack on the Even-Mansour construction in quantum time $\bigOt{2^{n/3}}$ and classical queries $\bigO{2^{n/3}}$. With the same number of queries, the classical attack requires $\bigO{2^{2n/3}}$ time \emph{and} $\bigO{2^{n/3}}$ classical memory to store the queries. In our attack, we do not need this storage. To the best of our knowledge, this is the first time that a quantum Q1 attack provides a quadratic speedup while the needs of hardware are also reduced.
\paragraph{Q2 Attacks Make a Difference.}
Schemes which do not have an attack in the superposition model cannot be attacked by our algorithm. We showed that their algebraic structure, which makes the superposition attack possible, indeed made a practical difference when it came to Q1 attacks. We believe that this question needs further investigation.

\subsubsection*{Acknowledgements. }
The authors thank L\'eo Perrin for proofreading this article and Elena Kirshanova for helpful remarks.
This project has received funding from the European Research Council (ERC) under the European Union's Horizon 2020 research and innovation programme (grant agreement $\mbox{n}^o$~714294 - acronym QUASYModo).

\bibliography{biblio}
\bibliographystyle{splncs03}

\fullversion{
\appendix

\newpage

\section{Proofs for Propositions \ref{prop:AlgQ2} and \ref{prop:AlgQ1}}\label{sec:erroranalyses}
Here we give error and complexity analyses for $\AlgQone$ and $\AlgQtwo$, and give proofs for Propositions \ref{prop:AlgQ2} and \ref{prop:AlgQ1}.

Recall that, in both of $\AlgQone$ and $\AlgQtwo$, we run a testing procedure $\sf test$ in each iteration of the Grover search.
For each $i \in \{0,1\}^m$ ($\{0,1\}^m$ is the search space), we expect that $\sf test$ checks whether $i$ is a good element by using auxiliary quantum data $\ket{\psi_g}$, without changing $\ket{\psi_g}$.
However, in fact $\sf test$ outputs results only with some errors, and $\ket{\psi_g}$ is slightly modified at each test.
Thus we have to do two kinds of analyses:
\begin{enumerate}
\item Analyses on the error of the procedure $\sf test$.
\item Analyses on how much the error of $\sf test$ affects the success probability of the entire Grover search.
\end{enumerate}

In Section~\ref{sec:ErrorSimon}, we give error analyses of the testing procedure $\sf test$.
In Section~\ref{sec:ErrorQAA}, we analyze how the error of $\sf test$ affects the success probability of the Grover search, in a general setting.
Actually we discuss about the quantum amplitude amplification (QAA) technique, which is a generalization of the entire Grover search.
Finally, in Section~\ref{sec:ErrorFin}, we prove Propositions~\ref{prop:AlgQ2} and \ref{prop:AlgQ1} by using the results in Sections~\ref{sec:ErrorSimon} and \ref{sec:ErrorQAA}.

\subsection{Error Analyses for the testing procedure $\sf test$}\label{sec:ErrorSimon}

Recall that, for each function $g : \{0,1\}^n \rightarrow \{0,1\}^\ell$ and a fixed positive integer $c$, the quantum state $\ket{\psi_g}$ is defined as 
\begin{align}
\ket{\psi_{g}}
&:= \bigotimes^{cn} \sum_{x \in \{0,1\}^{n}} \ket{x} \ket{g(x)}
\end{align}
Roughly speaking, we expect that the unitary operator $\sf test$ satisfies
\begin{equation}
{\sf test} \ket{i}\ket{\psi_{g}} \ket{b} =
\begin{cases}
\ket{i}\ket{\psi_{g}} \ket{b} & \text{ if $(f_i \oplus g)$ does not have a period,}\\
\ket{i}\ket{\psi_{g}} \ket{b \oplus 1} & \text{ if $(f_i \oplus g)$ has a period.}
\end{cases}
\end{equation}
holds with a small error.
Recall that $\sf test$ is defined as a unitary operator that realizes the following procedures:
\begin{enumerate}
\item Query $cn$ times to $F$ to obtain
\begin{align}
&\ket{i} \otimes \left( \sum_{x_1 \in \{0,1\}^{n}}  \ket{x_1} \ket{(f_i \oplus g)(x_1)} \right) \otimes \cdots \nonumber \\
&\hspace{50pt} \cdots \otimes
\left( \sum_{x_{cn} \in \{0,1\}^{n}}  \ket{x_{cn}} \ket{(f_i \oplus g)(x_{cn})} \right) \otimes \ket{b}.
\end{align}
\item Apply $I_m \otimes \left(H^{\otimes n} \otimes I_\ell \right)^{cn} \otimes I_1$ to obtain
\begin{align}
&\ket{i} \otimes \left( \sum_{u_1,x_1 \in \{0,1\}^{n}} (-1)^{u_1 \cdot x_1} \ket{u_1} \ket{(f_i \oplus g)(x_1)} \right) \otimes \cdots \nonumber \\
&\hspace{50pt} \cdots \otimes
\left( \sum_{u_{cn},x_{cn} \in \{0,1\}^{n}} (-1)^{u_{cn} \cdot x_{cn}} \ket{u_{cn}} \ket{(f_i \oplus g)(x_{cn})} \right) \otimes \ket{b}.
\end{align}
\item Compute $d := \dim({\rm Span}(u_1,\dots,u_{cn}))$, set $r:=0$ if $d = n$ and $r:=1$ if $d < n$, and add $r$ to $b$.
Then uncompute $d$ and $r$, and obtain
 \begin{align}
&\ket{i} \otimes \sum_{\substack{u_1, \dots, u_{cn} \\ x_1,\dots,x_{cn}}} (-1)^{u_1 \cdot x_1} \ket{u_1} \ket{(f_i \oplus g)(x_1)} \otimes \cdots \nonumber \\
&\hspace{50pt} \cdots \otimes (-1)^{u_{cn} \cdot x_{cn}} \ket{u_{cn}} \ket{(f_i \oplus g)(x_{cn})}  \otimes \ket{b \oplus r}.
\end{align}
\item Uncompute Steps 1 and 2.
\end{enumerate}
Note that if $(f_i \oplus g)$ has a period, then $r=1$ always holds and ${\sf test} \ket{i}\ket{\psi_{g}} \ket{b} = \ket{i}\ket{\psi_{g}} \ket{b \oplus 1}$ follows.
On the other hand, even if $(f_i \oplus g)$ does not have any period, ${\sf test}\ket{i} \ket{\psi_{g}} \ket{b}$ is not necessarily equal to $\ket{i}\ket{\psi_{g}} \ket{b}$.
However, we expect that ${\sf test}\ket{i} \ket{\psi_{g}} \ket{b} = \ket{i}\ket{\psi_{g}} \ket{b} + \ket{\delta}$ holds for some small error term $\ket{\delta}$.

Moreover, in our attack setting we apply the operator ${\sf test}$ to a superposition:
\begin{equation}\label{eq:UsimonGeneralState}
\sum_{\substack{f_i \in F_0}} \alpha_i \ket{i} \ket{\psi_{g}} \ket{b}
+
\sum_{f_i \in F_1} \beta_i \ket{i} \ket{\psi_{g}} \ket{b},
\end{equation}
where $\sum |\alpha_i|^2 + \sum |\beta_i|^2 = 1$, and $F_0,F_1$ are the subsets of $F$ defined as
\begin{align}
F_0 &:= \left\{ f_i \in F \middle| f_i \oplus g  \text{ does not have any period}\right\}, \nonumber \\
 F_1 &:= \left\{ f_i \in F \middle| f_i \oplus g \text{ has a period}\right\}. \nonumber
\end{align}
Roughly speaking, we expect that, when we apply ${\sf test}$ to the state~\eqref{eq:UsimonGeneralState}, it changes to
\begin{equation}
\sum_{f_i \in F_0} \alpha_i \ket{i} \ket{\psi_{g}} \ket{b}
+
\sum_{f_i \in F_1} \beta_i \ket{i} \ket{\psi_{g}} \ket{b \oplus 1}
\end{equation}
up to a small error term.

We can show the following lemma, which guarantees that the operator ${\sf test}$ works as we expect if $c$ is sufficiently large.

\begin{lemma}\label{lem:SimonErrorBound}
It holds that
\begin{align}
&{\sf test} \left( \sum_{f_i \in F_1} \beta_i \ket{i} \ket{\psi_{g}} \ket{b} \right) =
\sum_{f_i \in F_1} \beta_i \ket{i} \ket{\psi_{g}} \ket{b \oplus 1}
\label{eq:SimonErrorBoundG1}
\end{align}
for $b \in \{0,1\}$.
In addition, suppose that there exists a constant $0 \leq \epsilon < 1$ such that the following condition holds:
\begin{equation}\label{eq:SimonErrorHScondition}
\max_{ \substack{t \in \{0,1\}^{n} \setminus \{0^{n}\} \\ f_i \in F_0}} \Pr_{x \gets \{0,1\}^{n}}\left[ (f_i \oplus g)(x \oplus t)= (f_i \oplus g)(x) \right] \leq \epsilon
\end{equation}

Then, for arbitrary $b \in \{0,1\}$, there exists a vector $\ket{\delta_b}$ such that $\| \ket{\delta_b} \| < 2^{(n+1)/2} ((1 + \epsilon) / 2)^{cn/2}$ and
\begin{align}
&{\sf test} \left( \sum_{f_i \in F_0} \alpha_i \ket{i} \ket{\psi_{g}} \ket{b} \right)  =
\sum_{f_i \in F_0} \alpha_i \ket{i} \ket{\psi_{g}} \ket{b}
+
\ket{\delta_b} \label{eq:SimonErrorBoundG0}
\end{align}
holds.
\end{lemma}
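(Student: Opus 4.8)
The plan is to treat the two claims separately, handling the periodic case exactly and the non-periodic case by isolating an explicit error vector.

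For the periodic case \eqref{eq:SimonErrorBoundG1}, I would argue directly from the structure of the Simon layer in Steps~1--2. If $f_i \oplus g$ has a period $s \neq 0^n$, then the preimage set of every value $a$ is closed under $x \mapsto x \oplus s$, so after the Hadamard transform the amplitude of any $\ket{u}$ with $u \cdot s = 1$ cancels in disjoint pairs $\{x, x\oplus s\}$ to zero. Hence the whole superposition is supported on the hyperplane $\{u : u \cdot s = 0\}$, whose dimension is $n-1$, and so $d = \dim(\mathrm{Span}(u_1,\dots,u_{cn})) < n$ on the entire support. Consequently $r = 1$ with certainty, the output bit is flipped exactly, and \eqref{eq:SimonErrorBoundG1} holds with no error term.

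For the non-periodic case I would write the procedure as $\mathsf{test} = U^\dagger C_r U$, where $U$ collects the $F$-queries of Step~1 and the Hadamard layer of Step~2 (controlled by the index register $\ket{i}$), and $C_r$ is the Step~3 operation that adds $r$ to the output register. Writing $P_{\mathrm{good}}$ and $P_{\mathrm{bad}}$ for the projectors of the $u$-registers onto $d = n$ and $d < n$, one has $C_r = P_{\mathrm{good}} \otimes I + P_{\mathrm{bad}} \otimes X$ on the output bit. Setting $U\ket{i}\ket{\psi_g}\ket{b} = \ket{i}\ket{\Psi_i}\ket{b}$ with $\ket{\Psi_i} = \ket{\Psi_i^{\mathrm{good}}} + \ket{\Psi_i^{\mathrm{bad}}}$, a short manipulation (adding and subtracting $U^\dagger\ket{i}\ket{\Psi_i^{\mathrm{bad}}}\ket{b}$, and using that Step~4 uncomputes $U$ exactly) gives
\[
\mathsf{test}\,\ket{i}\ket{\psi_g}\ket{b} = \ket{i}\ket{\psi_g}\ket{b} + U^\dagger \ket{i}\ket{\Psi_i^{\mathrm{bad}}}\bigl(\ket{b\oplus 1} - \ket{b}\bigr),
\]
so the per-index error vector has norm exactly $\sqrt{2}\,\|\ket{\Psi_i^{\mathrm{bad}}}\|$, since $U^\dagger$ is an isometry and $\ket{b \oplus 1} - \ket{b}$ has norm $\sqrt 2$.

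The crux is then to bound $\|\ket{\Psi_i^{\mathrm{bad}}}\|^2$, which equals the probability that $cn$ independent Simon measurements for $h := f_i \oplus g$ fail to span $\zo^n$. Here I would invoke the Fourier identity $\Pr[u \cdot t = 0] = \tfrac{1}{2}\bigl(1 + \Pr_x[h(x) = h(x \oplus t)]\bigr)$ for each fixed $t \neq 0^n$; assumption \eqref{eq:SimonErrorHScondition} bounds this by $(1+\epsilon)/2$, so over $cn$ runs the probability that all measured vectors are orthogonal to a fixed $t$ is at most $((1+\epsilon)/2)^{cn}$, and a union bound over the $2^n-1$ nonzero $t$ yields $\|\ket{\Psi_i^{\mathrm{bad}}}\|^2 \leq 2^n((1+\epsilon)/2)^{cn}$. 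Hence each per-index error is at most $2^{(n+1)/2}((1+\epsilon)/2)^{cn/2}$. To pass to the superposition over $f_i \in F_0$, I would note that the error vectors for distinct $i$ are orthogonal because they carry distinct index registers, so $\|\ket{\delta_b}\|^2 = \sum_{f_i \in F_0}|\alpha_i|^2\,\|\ket{\delta_b^{(i)}}\|^2 \leq \max_i\|\ket{\delta_b^{(i)}}\|^2$ (using $\sum|\alpha_i|^2 \leq 1$), giving \eqref{eq:SimonErrorBoundG0}. The main obstacle is the reversibility bookkeeping---verifying that Step~4 returns the good component to $\ket{\psi_g}$ exactly so that the only residual is the bad-subspace term---together with the Fourier identity, which is the one genuinely non-trivial computational ingredient (it is the same computation underlying Proposition~\ref{prop:SimonGen}).
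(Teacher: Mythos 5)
Your proposal is correct and follows essentially the same route as the paper's proof: decompose the post-Hadamard state into the components where the measured $u_j$ do or do not span $\zo^n$, isolate the residual $U^\dagger\ket{i}\ket{\Psi_i^{\mathrm{bad}}}(\ket{b\oplus 1}-\ket{b})$ of norm $\sqrt{2}\,\|\ket{\Psi_i^{\mathrm{bad}}}\|$, bound $\|\ket{\Psi_i^{\mathrm{bad}}}\|^2$ by a union bound over $t\neq 0^n$ combined with the Fourier identity $\Pr[u\perp t]=\tfrac12(1+\Pr_x[h(x)=h(x\oplus t)])$, and sum over $F_0$ using orthogonality of the index registers. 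The only cosmetic differences are that you merge the paper's $U_1,U_2$ into a single $U$ and spell out the periodic case (which the paper dismisses as immediate), so nothing is missing.
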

\begin{proof}
\eqref{eq:SimonErrorBoundG1} obviously follows from the definition of ${\sf test}$ and $F_1$.
Below we show that \eqref{eq:SimonErrorBoundG0} holds. Notice that~\eqref{eq:SimonErrorHScondition} holds when the $f_i \in F_0$ are almost random and summarizes that each of them is far from periodic.

Let $U_1$, $U_2$, and $U_3$ be the unitary operators that realize Steps 1, 2 and 3 of ${\sf test}$, respectively.
($U^*_1$ and $U^*_2$ correspond to uncomputing $U_1$ and $U_2$, respectively, and ${\sf test} = U^*_1 U^*_2 U_3 U_2 U_1$ holds.)
Note that $U_1 \ket{i}\ket{\psi_g}\ket{b} = \ket{i}\ket{\psi_{f_i \oplus g}}\ket{b}$ holds for each $i$.

Let $\ket{\psi^{(i)}_2} := U_2U_1 \ket{i}\ket{\psi_{g}} \ket{b} = U_2 \ket{i}\ket{\psi_{f_i \oplus g}} \ket{b}$.
In addition, let us decompose $\ket{\psi^{(i)}_2}$ as
\begin{equation}
\ket{\psi^{(i)}_2} = \ket{i} \otimes \ket{\eta^{(i)}_{\good}}\otimes\ket{b} + \ket{i} \otimes \ket{\eta^{(i)}_{\bad}} \otimes \ket{b},
\end{equation}
where $\ket{\eta^{(i)}_{\good}}$ and $\ket{\eta^{(i)}_{\bad}}$ are the vectors projected to the subspaces that correspond to the tuple $(u_1,\dots,u_{cn})$ such that $\dim({\rm Span}(u_1,\dots,u_{cn})) = n$ and $\dim({\rm Span}(u_1,\dots,u_{cn})) < n$, respectively, in Step 2 of $\sf test$.
In particular,
 \begin{align}
\ket{\eta^{(i)}_{\bad}} &= \sum_{\substack{ x_1,\dots,x_{cn} \\ u_1, \dots, u_{cn} \\ \dim({\rm Span}(u_1,\dots,u_{cn})) < n}} (-1)^{u_1 \cdot x_1} \ket{u_1} \ket{(f_i \oplus g)(x_1)} \otimes \cdots \nonumber \\
& \hspace{75pt} \cdots \otimes (-1)^{u_{cn} \cdot x_{cn}} \ket{u_{cn}} \ket{(f_i \oplus g)(x_{cn})}
\end{align}
holds.
Then
\begin{equation}
U_3U_2U_1\ket{i}\ket{\psi_{g}}\ket{b} = U_3\ket{\psi^{(i)}_2} = \ket{i}\ket{\eta^{(i)}_{\good}} \ket{b} + \ket{i}\ket{\eta^{(i)}_{\bad}}\ket{b \oplus 1}
\end{equation}
and
\begin{align}
{\sf test}\ket{i}\ket{\psi_{g}}\ket{b} &= U^*_1U^*_2 U_3 U_2U_1\ket{i}\ket{\psi_{g}}\ket{b} \nonumber \\
&= U^*_1U^*_2\left(\ket{i} \otimes \ket{\eta^{(i)}_{\good}} \otimes \ket{b} + \ket{i}\otimes \ket{\eta^{(i)}_{\bad}}\ket{b}\right) \nonumber \\
&\quad + U^*_1U^*_2\left(\ket{i}\ket{\eta^{(i)}_{\bad}}\ket{b\oplus 1} - \ket{i}\ket{\eta^{(i)}_{\bad}}\ket{b} \right) \nonumber \\
&= \ket{i}\ket{\psi_{g}}\ket{b} + \ket{\delta_{b,i}}
\end{align}
follows, where $\ket{\delta_{b,i}}= U^*_1U^*_2\left(\ket{i}\ket{\eta^{(i)}_{\bad}}\ket{b\oplus 1} - \ket{i}\ket{\eta^{(i)}_{\bad}}\ket{b} \right)$.
Therefore we have
\begin{align}
&{\sf test} \left( \sum_{f_i \in F_0} \alpha_i \ket{i} \ket{\psi_{g}} \ket{b} \right)  =
\sum_{f_i \in F_0} \alpha_i \ket{i} \ket{\psi_{g}} \ket{b}
+
\ket{\delta_b},
\end{align}
where $\ket{\delta_b} = \sum_{f_i \in F_0} \alpha_i \ket{\delta_{b,i}}$. 
Note that
\begin{equation}
\| \ket{\delta_b} \| = \sqrt{2} \left\| \sum_{f_i \in F_0} \alpha_i \ket{i} \ket{\eta^{(i)}_\bad} \ket{b} \right\| 
\end{equation}
holds.

By orthogonality of the $\ket{i}$, we have:
\begin{equation}
\| \ket{\delta_b} \|^2 = 2 \sum_{f_i \in F_0} |\alpha_i|^2 \left\| \ket{\eta^{(i)}_\bad} \right\|^2 \leq 2 \sum_{f_i \in F_0} |\alpha_i|^2  \max_{f_i \in F_0}  \left\| \ket{\eta^{(i)}_\bad} \right\|^2 \leq 2 \max_{f_i \in F_0}  \left\| \ket{\eta^{(i)}_\bad} \right\|^2 \label{eq:SimonDeltaBound}
\end{equation}
which means that it suffices to focus on a single $f_i$ and bound $p_{\bad}^{(i)} := \left\| \ket{\eta^{(i)}_\bad} \right\|^2$. This is the probability of obtaining, for this function $f_i$, a tuple $(u_1,\dots,u_{cn})$ such that $\dim(\linebreak[3]{\rm Span}(u_1,\dots,u_{cn})) < n$ upon measurement of the state $\ket{\psi^{(i)}_2} = \ket{i} \otimes \left( \ket{\eta^{(i)}_{\good}} + \ket{\eta^{(i)}_{\bad}} \right) \otimes \ket{b}$. We just have to make sure that it is low for all $f_i$.

$\dim({\rm Span}(u_1,\dots,u_{cn})) < n$ holds if and only if there exists $t \in \{0,1\}^{n} \setminus \{0^{n}\}$ such that $t \perp u_j$ for all $1 \leq j \leq cn$.
In addition, the registers $\ket{u_1},\dots,\ket{u_{cn}}$ of $\ket{\psi^{(i)}_2}$ are unentangled from each other.
Thus we have that
\begin{align}
p_{\bad}^{(i)} = &\Pr\left[  \left(u_1,\dots,u_{cn}\right) \gets \left(\text{measure} \ket{\psi^{(i)}_2} \right) : \right. \nonumber \\ & \hspace{100pt} \left. \phantom{\ket{\psi^{(i)}_2}} \exists t \neq 0^{n} \text{ s.t. } t \perp u_j \text{ for } 1 \leq j \leq cn \right] \nonumber \\
&\leq \sum_{t \in \{0,1\}^{n} \setminus \{0^{n}\} }  \Pr\left[  \left(u_1,\dots,u_{cn}\right) \gets \left(\text{measure} \ket{\psi^{(i)}_2} \right) :  \right. \nonumber \\ & \hspace{100pt} \left.  \phantom{\ket{\psi^{(i)}_2}} t \perp u_j \text{ for } 1 \leq j \leq cn \right] \nonumber \\
&= \sum_{t \in \{0,1\}^{n} \setminus \{0^{n}\} }  \left( \Pr\left[ u \gets \left(\text{measure} \ket{\Psi^{(i)}}\right) :  t \perp u \right] \right)^{cn} \nonumber \\
&\leq 2^n   \left( \max_{t \in \{0,1\}^{n} \setminus \{0^{n}\} } \Pr\left[ u \gets \left(\text{measure} \ket{\Psi^{(i)}}\right) :  t \perp u \right] \right)^{cn} \label{eq:Psi1ProbBad}
\end{align}
holds, where $\ket{\Psi^{(i)}} = \sum_{u,x} (-1)^{u \cdot x} \ket{u} \ket{(f_i \oplus g)(x)}$.

Now we use the following claim as a fact, which is shown as a subordinate result in the proof of Theorem 1 in~\cite{DBLP:conf/crypto/KaplanLLN16}.
\begin{claim}
Let $h : \{0,1\}^{n} \rightarrow \{0,1\}^\ell$ be a function and $\ket
{\Phi} := \sum_{u,x} (-1)^{u \cdot x} \ket{u}\ket{h(x)}$.
Then, for each $t \in \{0,1\}^{n} \setminus \{0^{n}\}$,
\begin{equation}
\Pr\left[u \gets (\text{measure } \ket{\Phi}) : t \perp u \right] = \frac{1}{2}\left( 1 + \Pr_{x \gets \{0,1\}^{n}}\left[ h(x \oplus t) = h(x) \right] \right)
\end{equation}
holds.
\end{claim}
Applying this claim with $(f_i \oplus g)$ and $\ket{\Psi^{(i)}}$ instead of $h$ and $\ket{\Phi}$, from \eqref{eq:Psi1ProbBad} it follows that $p_{\bad}^{(i)}$ is upper bounded by
\begin{align}
2^n   \left(\frac{1}{2}\left( 1 + \max_{t \in \{0,1\}^{n} \setminus \{0^{n}\}} \Pr_{x \gets \{0,1\}^{n}}\left[ (f_i \oplus g)(x \oplus t)= (f_i \oplus g)(x) \right] \right) \right)^{cn}.
\end{align}

In addition, by the condition \eqref{eq:SimonErrorHScondition},
\begin{align}
\max_{ \substack{t \in \{0,1\}^{n} \setminus \{0^{n}\} \\ f_i \in F_0}} \Pr_{x \gets \{0,1\}^{n}}\left[ (f_i \oplus g)(x \oplus t)= (f_i \oplus g)(x) \right] \leq \epsilon
\end{align}
holds for all $f_i \in F_0$.
Thus
\begin{align}
p_\bad^{(i)} \leq 2^n\left( \frac{1 + \epsilon}{2} \right)^{cn}. \label{eq:SimonPbadUpperFinal}
\end{align}
follows.

From \eqref{eq:SimonDeltaBound} and \eqref{eq:SimonPbadUpperFinal},
\begin{equation}
\| \ket{\delta_b} \|^2 \leq 2^{n+1}\left(\frac{1 + \epsilon}{2}\right)^{cn}
\end{equation}
follows, which completes the proof.
\qed
\end{proof}

\subsection{Error Propagation Analyses for QAA}\label{sec:ErrorQAA}
Here we analyze how the error of testing procedures affects success probability and complexity of the entire Grover search.
For generality, we analyze the quantum amplitude amplification (QAA) technique by Brassard \emph{et al.}~\cite{brassard2002quantum} rather than the original Grover search.
We first review the original QAA technique, and then give error analyses in our setting, explaining the difference between our setting and the original one.

\subsubsection{The Original Quantum Amplitude Amplification}
Let $\mathcal{A}$ be a unitary operator that acts on $n$-qubit states and $\chi : \{0,1\}^n \rightarrow \{0,1\}$ be a function.
Suppose that we obtain $x$ such that $\chi(x)=1$ when we measure the state $\mathcal{A}\ket{0^n}$ with a probability $a > 0$.
We say that $x \in \{0,1\}^n$ is \emph{good} if $\chi(x)=1$ and $x$ is \emph{bad} otherwise.
In addition, we say that a vector is good (resp. bad) if it is in the space spanned by $\{ \ket{x} \}_{x : \text{good}}$ (resp., $\{ \ket{x} \}_{x : \text{bad}}$).
Our goal is to amplify the probability $a$ and get ``good'' $x \in \{0,1\}^n$ such that $\chi(x)=1$ with a high probability.

Let $\mathcal{S}_\chi$ be the unitary operator defined as
\begin{equation}
\mathcal{S}_\chi\ket{x} :=
\begin{cases}
-\ket{x}  \text{ if } \chi(x)= 1, \\
\phantom{-}\ket{x}  \text{ if } \chi(x) = 0.
\end{cases}
\end{equation}
Below we call $\mathcal{S}_\chi$ \emph{checking procedure} since it checks whether $\chi(x)=1$ and changes the phase accordingly.
In addition, let $\mathcal{S}_0$ denote $\mathcal{S}_{\chi_0}$, where $\chi_0$ is the function such that $\chi_0(x)=1$ if and only if $x = 0^n$.
Define a unitary operator $Q=Q(\mathcal{A},\chi)$ by $Q := - \mathcal{A}\mathcal{S}_0\mathcal{A}^{-1}\mathcal{S}_\chi$.
Then the following proposition holds.
\begin{proposition}[Theorem 2 in \cite{brassard2002quantum}.]\label{prop:QAAknowinga}
Set $r := \lfloor \pi / 4\theta_a \rfloor$, where $\theta_a$ is the parameter such that $\sin^2 \theta_a = a$ and $0 < \theta_a \leq \pi/2$ (note that $r \approx \sqrt{1/a}$ holds if a is sufficiently small).
When we measure $Q^r \mathcal{A} \ket{0^n}$, we obtain $x \in \{0,1\}^n$ such that $\chi(x) =1$ with a probability at least $\max{(1-a,a)}$.
\end{proposition}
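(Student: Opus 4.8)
The plan is to reduce the entire dynamics to a rotation inside a two-dimensional real subspace, which is the standard geometric picture behind amplitude amplification. First I would split the space according to $\chi$: let $\ket{\psi_1}$ be the projection of $\mathcal{A}\ket{0^n}$ onto the good subspace $\mathrm{Span}\{\ket{x} : \chi(x)=1\}$ and $\ket{\psi_0}$ its projection onto the bad subspace, so that $\mathcal{A}\ket{0^n} = \ket{\psi_1} + \ket{\psi_0}$ with $\|\ket{\psi_1}\|^2 = a$ and $\|\ket{\psi_0}\|^2 = 1-a$. Normalizing, I would write $\mathcal{A}\ket{0^n} = \sin\theta_a \ket{\tilde\psi_1} + \cos\theta_a \ket{\tilde\psi_0}$, where $\ket{\tilde\psi_1},\ket{\tilde\psi_0}$ are unit vectors and $\sin^2\theta_a = a$; let $\mathcal{H}_\psi$ denote the plane they span.

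The key step is to show that $Q$ preserves $\mathcal{H}_\psi$ and acts there as a rotation by angle $2\theta_a$. I would observe that $\mathcal{S}_\chi$ restricted to $\mathcal{H}_\psi$ is the reflection fixing $\ket{\tilde\psi_0}$ and negating $\ket{\tilde\psi_1}$, i.e. the reflection across the bad axis, while $-\mathcal{A}\mathcal{S}_0\mathcal{A}^{-1} = 2\,\mathcal{A}\ket{0^n}\bra{0^n}\mathcal{A}^{-1} - I$ is the reflection across the line spanned by $\mathcal{A}\ket{0^n}$. Since the angle between $\mathcal{A}\ket{0^n}$ and $\ket{\tilde\psi_0}$ is exactly $\theta_a$, the composition $Q$ of these two reflections is a rotation by $2\theta_a$ within $\mathcal{H}_\psi$. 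By induction this gives $Q^r\mathcal{A}\ket{0^n} = \sin\big((2r+1)\theta_a\big)\ket{\tilde\psi_1} + \cos\big((2r+1)\theta_a\big)\ket{\tilde\psi_0}$, so measuring yields a good $x$ with probability exactly $\sin^2\big((2r+1)\theta_a\big)$.

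Finally I would bound this probability using $r = \lfloor \pi/(4\theta_a)\rfloor$. From $\pi/(4\theta_a) - 1 < r \leq \pi/(4\theta_a)$ I obtain $\pi/2 - \theta_a < (2r+1)\theta_a \leq \pi/2 + \theta_a$, hence $|(2r+1)\theta_a - \pi/2| \leq \theta_a$. Rewriting $\sin^2\big((2r+1)\theta_a\big) = \cos^2\big((2r+1)\theta_a - \pi/2\big)$ and using that $\cos^2$ is decreasing in the magnitude of its argument on $[0,\pi/2]$, I get the lower bound $\cos^2\theta_a = 1-a$. The remaining term $a$ in $\max(1-a,a)$ covers the regime $a > 1/2$: there $\theta_a > \pi/4$ forces $r = 0$, and the measured probability is then simply $\sin^2\theta_a = a$.

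I expect the main obstacle to be the second step, namely rigorously verifying that $\mathcal{H}_\psi$ is $Q$-invariant and that the two operators genuinely act as the claimed reflections; in particular one must check the action of $-\mathcal{A}\mathcal{S}_0\mathcal{A}^{-1}$ on a vector of $\mathcal{H}_\psi$ orthogonal to $\mathcal{A}\ket{0^n}$, so that the product is a bona fide rotation rather than an arbitrary unitary. Once this geometric core is in place, the trigonometric estimate in the last step is routine.
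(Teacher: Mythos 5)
Your proof is correct and takes essentially the same route as the paper, which decomposes $\mathcal{A}\ket{0^n}$ into good and bad components, invokes the rotation identity $Q\ket{\phi_0},Q\ket{\phi_1}\mapsto$ (rotation by $2\theta_a$) as a lemma cited from Brassard et al., and concludes from $Q^r\mathcal{A}\ket{0^n} = \sin((2r+1)\theta_a)\ket{\phi_1}+\cos((2r+1)\theta_a)\ket{\phi_0}$. You merely fill in details the paper defers to the citation — the two-reflections justification that $Q$ is a rotation of the invariant plane, and the trigonometric estimate giving $\max(1-a,a)$ including the $r=0$ case for $a>1/2$ — and your good/bad labeling is the consistent one (the paper's sketch inadvertently swaps the words ``good'' and ``bad'' relative to the amplitudes $\sin\theta_a$ and $\cos\theta_a$).
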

Here we give a rough overview of a proof of the above proposition.
Let us put $\ket{\phi} := \mathcal{A}\ket{0^n}$, and decompose $\ket{\phi}$ as $\ket{\phi} = \cos\theta_a\ket{\phi_0} + \sin\theta_a\ket{\phi_1}$, where $\ket{\phi_0}$ and $\ket{\phi_1}$ are good and bad vectors, respectively.
Note that $\ket{\phi_0}$ and $\ket{\phi_1}$ are uniquely determined from $\ket{\phi}$.
Then we can show the following lemma, which claims that $Q$ can be regarded as a rotation matrix on the space spanned by $\ket{\phi_0}$ and $\ket{\phi_1}$.
\begin{lemma}[Lemma 1 in~\cite{brassard2002quantum}]\label{lem:QisRotationMatrix}
It holds that
\begin{align}
Q\ket{\phi_1} &= \cos2\theta_a \ket{\phi_1} - \sin2\theta_a \ket{\phi_0}, \nonumber \\
Q\ket{\phi_0} &= \sin2\theta_a \ket{\phi_1} +\cos2\theta_a \ket{\phi_0}. \nonumber
\end{align}
\end{lemma}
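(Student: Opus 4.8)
The plan is to recognize $Q$ as a composition of two reflections and to read off its action on the invariant plane $\mathrm{span}\{\ket{\phi_0},\ket{\phi_1}\}$ by a direct computation. First I would rewrite the two factors of $Q$ as reflection operators. Since $\mathcal{S}_0 = I - 2\ket{0^n}\bra{0^n}$, conjugating by $\mathcal{A}$ gives $\mathcal{A}\mathcal{S}_0\mathcal{A}^{-1} = I - 2\mathcal{A}\ket{0^n}\bra{0^n}\mathcal{A}^{-1} = I - 2\ket{\phi}\bra{\phi}$, so that $-\mathcal{A}\mathcal{S}_0\mathcal{A}^{-1} = 2\ket{\phi}\bra{\phi} - I$ is the reflection about the line through $\ket{\phi}$. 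Likewise, writing $P_{\good}$ for the orthogonal projector onto the good subspace $\mathrm{span}\{\ket{x} : \chi(x)=1\}$, the definition of $\mathcal{S}_\chi$ gives $\mathcal{S}_\chi = I - 2P_{\good}$, the reflection that negates the good part. Hence $Q = (2\ket{\phi}\bra{\phi} - I)(I - 2P_{\good})$ is a product of two reflections, and is therefore a rotation on the plane they share.

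Next I would check that the real plane $\mathcal{H}_\phi := \mathrm{span}\{\ket{\phi_0}, \ket{\phi_1}\}$ is invariant under $Q$, so that the stated matrix identities even make sense. By construction $\ket{\phi_1}$ is the normalized projection of $\ket{\phi}$ onto the good subspace and $\ket{\phi_0}$ its orthogonal (bad) complement, so within $\mathcal{H}_\phi$ the projector $P_{\good}$ acts as $\ket{\phi_1}\bra{\phi_1}$; consequently both reflections map $\mathcal{H}_\phi$ into itself (the vector $\ket{\phi}$ lies in $\mathcal{H}_\phi$), and it suffices to evaluate $Q$ on the basis $\{\ket{\phi_0}, \ket{\phi_1}\}$. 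I would also record that $\langle \phi | \phi_0 \rangle = \cos\theta_a$ and $\langle \phi | \phi_1 \rangle = \sin\theta_a$ are real, directly from the decomposition $\ket{\phi} = \cos\theta_a \ket{\phi_0} + \sin\theta_a \ket{\phi_1}$ with $0 < \theta_a \le \pi/2$.

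Then the computation is routine. Applying $\mathcal{S}_\chi$ first fixes $\ket{\phi_0}$ and negates $\ket{\phi_1}$; applying $2\ket{\phi}\bra{\phi} - I$ afterwards and substituting the inner products yields, via the double-angle identities $2\sin\theta_a\cos\theta_a = \sin 2\theta_a$, $2\cos^2\theta_a - 1 = \cos 2\theta_a$, and $2\sin^2\theta_a - 1 = -\cos 2\theta_a$:
\begin{align}
Q\ket{\phi_0} &= (2\ket{\phi}\bra{\phi} - I)\ket{\phi_0} = \cos 2\theta_a\,\ket{\phi_0} + \sin 2\theta_a\,\ket{\phi_1}, \nonumber \\
Q\ket{\phi_1} &= -(2\ket{\phi}\bra{\phi} - I)\ket{\phi_1} = \cos 2\theta_a\,\ket{\phi_1} - \sin 2\theta_a\,\ket{\phi_0},
\end{align}
which are exactly the two claimed relations.

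The main subtlety is not the algebra but the bookkeeping: one must fix the convention that the good direction in $\mathcal{H}_\phi$ is $\ket{\phi_1}$ (consistent with $\sin^2\theta_a = a$ being the probability of measuring a good outcome), verify that $P_{\good}$ genuinely restricts to $\ket{\phi_1}\bra{\phi_1}$ on the plane even though the full good subspace may be higher-dimensional, and track the overall sign in $Q = -\mathcal{A}\mathcal{S}_0\mathcal{A}^{-1}\mathcal{S}_\chi$ so that the two double-angle expressions come out with the signs stated in the lemma.
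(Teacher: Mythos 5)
Your proof is correct, and it is the standard argument: the paper itself gives no proof of this lemma (it is quoted from Brassard et al.~\cite{brassard2002quantum}), and your decomposition of $Q$ into the two reflections $2\ket{\phi}\bra{\phi}-I$ and $I-2P_{\mathsf{good}}$, followed by the double-angle computation on the invariant plane $\mathrm{span}\{\ket{\phi_0},\ket{\phi_1}\}$, is exactly the argument of the cited source. The convention point you flag is genuinely needed: the surrounding text in the appendix says ``$\ket{\phi_0}$ and $\ket{\phi_1}$ are good and bad vectors, respectively,'' which is a typo --- the signs in the lemma only come out as stated if $\ket{\phi_1}$ is the good component (carrying amplitude $\sin\theta_a$, consistent with $\sin^2\theta_a=a$) and $\ket{\phi_0}$ the bad one, which is also the convention actually used later in Lemma~\ref{lemma:AlgQ1Q2common}.
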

From the above lemma, we have that
\begin{equation}
Q^r\mathcal{A}\ket{0^n} = \sin((2r+1)\theta_a) \ket{\phi_1} + \cos((2r+1)\theta_a) \ket{\phi_0},
\end{equation}
which implies that we obtain $x \in \{0,1\}^n$ such that $\chi(x)=1$ since $(2r+1)\theta_a \approx \pi/2$.


\subsubsection{Our Setting: Uncertain Checking Procedures}
Unlike the original technique, in our setting we consider the situation that the checking procedure can be done only with some errors.
That is, we have access to a unitary operator $\widehat{\mathcal{S}'_\chi}$ such that
\begin{align}
\widehat{\mathcal{S}'_\chi} \ket{\phi_1}\ket{\psi}\ket{b} &= \ket{\phi_1}\ket{\psi}\ket{b \oplus 1} + \ket{\delta_{1,b}},  \\
\widehat{\mathcal{S}'_\chi} \ket{\phi_0}\ket{\psi}\ket{b} &= \ket{\phi_0}\ket{\psi}\ket{b} + \ket{\delta_{0,b}},
\end{align}
where $\ket{\psi}$ is an auxiliary data and $b \in \{0,1\}$, and there exists an $\epsilon \geq 0$ such that $\|\delta_{a,b}\| \leq \epsilon$ holds for $a,b \in \{0,1\}$.
(In our attack, the operator $\widehat{\mathcal{S}'_\chi}$ is denoted by $\sf test$.)

Note that, given such $\widehat{\mathcal{S}'_\chi}$, by using an ancilla qubit we can implement a unitary operator $\mathcal{S}'_\chi$ such that
\begin{align}
{\mathcal{S}'_\chi} \ket{\phi_1}\ket{\psi} &= -\ket{\phi_1}\ket{\psi} + \ket{\delta_1}, \\
{\mathcal{S}'_\chi} \ket{\phi_0}\ket{\psi} &= \phantom{-}\ket{\phi_0}\ket{\psi} + \ket{\delta_0},
\end{align}
where $\| \ket{\delta_0} \|, \| \ket{\delta_1}\| \leq 2\epsilon$ holds, since
\begin{equation}
\widehat{\mathcal{S}'_\chi} \ket{\phi_a}\ket{\psi}\ket{-} = (-1)^a\ket{\phi_a}\ket{\psi}\ket{-} + \frac{1}{\sqrt{2}}(\ket{\delta_{a,0}} - \ket{\delta_{a,1}})
\end{equation}
holds for $a \in \{0,1\}$, here $\ket{-} := \frac{1}{\sqrt{2}}(\ket{0} - \ket{1})$.

Let us define $Q' := -((\mathcal{A}\mathcal{S}_0 \mathcal{A}^{-1}) \otimes I) \mathcal{S}'_\chi$.
Then the following lemma holds.
\begin{lemma}\label{lem:QisRotationMatrixError}
There exist vectors $\ket{\delta'_0},\ket{\delta'_1}$ such that $\|\ket{\delta'_0}\|,\|\ket{\delta'_1}\| \leq 2\epsilon$ and
\begin{align}
Q'\ket{\phi_1}\ket{\psi} &= \left( \cos2\theta_a \ket{\phi_1} - \sin2\theta_a \ket{\phi_0}\right) \ket{\psi} + \ket{\delta'_1}, \nonumber \\
Q'\ket{\phi_0} \ket{\psi} &= \left( \sin2\theta_a \ket{\phi_1} +\cos2\theta_a \ket{\phi_0} \right) \ket{\psi} + \ket{\delta'_0} \nonumber
\end{align}
hold.
\end{lemma}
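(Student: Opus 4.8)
The plan is to treat $Q'$ as a mild perturbation of the exact QAA iterate $Q$ of Lemma~\ref{lem:QisRotationMatrix}. The point is that $\mathcal{S}'_\chi$ agrees with the ideal phase-flip $\mathcal{S}_\chi$ on the two relevant vectors up to the error terms $\ket{\delta_0},\ket{\delta_1}$ already bounded by $2\epsilon$, and the only other ingredient of $Q'$ is the \emph{exact} reflection $\mathcal{A}\mathcal{S}_0\mathcal{A}^{-1}$. So all I need to do is push the error through a unitary and then substitute the rotation formulas from Lemma~\ref{lem:QisRotationMatrix}.

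First I would record that, by the very definition of $\mathcal{S}'_\chi$, for $a \in \{0,1\}$ one has $\mathcal{S}'_\chi\ket{\phi_a}\ket{\psi} = (\mathcal{S}_\chi\ket{\phi_a})\ket{\psi} + \ket{\delta_a}$, where $\mathcal{S}_\chi\ket{\phi_1} = -\ket{\phi_1}$ and $\mathcal{S}_\chi\ket{\phi_0} = \ket{\phi_0}$, and $\|\ket{\delta_a}\| \le 2\epsilon$; in other words $\mathcal{S}'_\chi$ equals $(\mathcal{S}_\chi \otimes I)$ on these two vectors plus a bounded correction. Next, writing $R := \mathcal{A}\mathcal{S}_0\mathcal{A}^{-1}$ so that $Q = -R\mathcal{S}_\chi$ and $Q' = -(R \otimes I)\mathcal{S}'_\chi$, linearity gives
\[
Q'\ket{\phi_a}\ket{\psi} = -\bigl(R\,\mathcal{S}_\chi\ket{\phi_a}\bigr)\ket{\psi} - (R \otimes I)\ket{\delta_a} = \bigl(Q\ket{\phi_a}\bigr)\ket{\psi} + \ket{\delta'_a},
\]
where I set $\ket{\delta'_a} := -(R \otimes I)\ket{\delta_a}$. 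Substituting the two identities of Lemma~\ref{lem:QisRotationMatrix} for $Q\ket{\phi_0}$ and $Q\ket{\phi_1}$ then yields exactly the two displayed rotation formulas of the statement.

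It remains to bound $\ket{\delta'_a}$, and this is the only point requiring any care rather than pure formality: one observes that $R = \mathcal{A}\mathcal{S}_0\mathcal{A}^{-1}$ is a product of unitaries, hence itself unitary, so $R \otimes I$ is norm-preserving and $\|\ket{\delta'_a}\| = \|(R \otimes I)\ket{\delta_a}\| = \|\ket{\delta_a}\| \le 2\epsilon$. Thus the single-step error does not amplify under the reflection, which is precisely what the lemma asserts.

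I do not expect a genuine obstacle here; the substance of the lemma is simply that the per-iteration error is inherited unchanged from the approximate checking procedure, with the unitarity of $R$ guaranteeing no growth. The more delicate question — how these $\ket{\delta'_a}$ accumulate across the $r$ iterations of $(Q')^{r}$ — is a separate matter that I would handle afterwards, when bounding the total deviation of $(Q')^{r}\mathcal{A}\ket{0^n}$ from the ideal output.
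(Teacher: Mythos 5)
Your proposal is correct and follows essentially the same route as the paper's proof: both reduce the claim to the fact that $Q'$ and $Q\otimes I$ differ only through $\mathcal{S}'_\chi$ versus $\mathcal{S}_\chi\otimes I$, push the error $\ket{\delta_a}$ through the unitary reflection $\mathcal{A}\mathcal{S}_0\mathcal{A}^{-1}\otimes I$ without growth, and then invoke Lemma~\ref{lem:QisRotationMatrix}. Your version merely makes the error vector $\ket{\delta'_a}=-(R\otimes I)\ket{\delta_a}$ explicit where the paper works directly with norms of differences.
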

\begin{proof}
We have that
\begin{equation}
\|Q'\ket{\phi_1}\ket{\psi} - (Q\otimes I)\ket{\phi_1}\ket{\psi} \| = \|\mathcal{S}'_\chi\ket{\phi_1}\ket{\psi} - (\mathcal{S}_\chi \otimes I)\ket{\phi_1} \ket{\psi} \| = \|\delta_1\| \leq 2\epsilon
\end{equation}
holds.
Therefore the first equality follows from Lemma~\ref{lem:QisRotationMatrix}.
We can show the second equality in the same way.
\qed
\end{proof}
By using this lemma we can show the following proposition.
\begin{proposition}[QAA with uncertain checking procedures.]\label{prop:QAAuncertainKnowinga}
Let $\epsilon$ be the error of $\widehat{\mathcal{S}'_\chi}$ described above.
Then we have that
\begin{align}
&\left| \Pr\left[ x \gets \left(\text{\rm measure } {Q'}^j \mathcal{A}\ket{0^n}\ket{\psi} \right) : \chi(x) = 1 \right] \right. \nonumber \\
&\qquad \quad \left. - \Pr\left[ x \gets \left(\text{\rm measure } {Q}^j \mathcal{A}\ket{0^n}\right) : \chi(x) = 1 \right] \right| \leq 4j\epsilon
\end{align}
holds for all $x \in \{0,1\}^n$.
In particular, for $r := \lfloor \pi / 4\theta_a \rfloor$, we obtain $x \in \{0,1\}^n$ such that $\chi(x) =1$ with a probability at least
$\max{(1-a,a)} - 4r\epsilon $ when we measure ${Q'}^r \mathcal{A} \ket{0^n}\ket{\psi}$.
\end{proposition}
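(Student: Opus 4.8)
The plan is to prove the quantitative inequality by a hybrid argument that tracks the Euclidean distance between the noisy trajectory $\ket{w_j} := {Q'}^{j}\mathcal{A}\ket{0^n}\ket{\psi}$ and the ideal trajectory $\ket{v_j}\ket{\psi}$, where $\ket{v_j} := Q^{j}\mathcal{A}\ket{0^n}$ stays inside the two-dimensional subspace spanned by $\ket{\phi_0},\ket{\phi_1}$. Once this distance is under control, the ``in particular'' clause follows by taking $j=r$ and quoting Proposition~\ref{prop:QAAknowinga} for the ideal success probability.

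First I would establish, by induction on $j$, the state-distance bound $\|\ket{w_j}-\ket{v_j}\ket{\psi}\|\le 4j\epsilon$. The base case $j=0$ is exact, since $\ket{w_0}=\mathcal{A}\ket{0^n}\ket{\psi}=\ket{v_0}\ket{\psi}$. For the inductive step I would split
\begin{align}
\ket{w_{j+1}}-\ket{v_{j+1}}\ket{\psi}
&= Q'\bigl(\ket{w_j}-\ket{v_j}\ket{\psi}\bigr) \nonumber \\
&\quad + \bigl(Q'(\ket{v_j}\ket{\psi})-(Q\ket{v_j})\ket{\psi}\bigr). \nonumber
\end{align}
The first summand is handled using that $Q'$ is unitary (it is the composition of the unitary $\mathcal{S}'_\chi$ with $(\mathcal{A}\mathcal{S}_0\mathcal{A}^{-1})\otimes I$ and a global phase), so it contributes exactly $\|\ket{w_j}-\ket{v_j}\ket{\psi}\|$ and carries the already-accumulated error forward \emph{without} amplifying it. For the second summand I decompose $\ket{v_j}=\cos\gamma_j\ket{\phi_0}+\sin\gamma_j\ket{\phi_1}$ — which is legitimate because, by Lemma~\ref{lem:QisRotationMatrix}, $Q$ keeps $\ket{v_j}$ in the two-dimensional rotation plane — and apply Lemma~\ref{lem:QisRotationMatrixError} term by term. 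The deviation then equals $\cos\gamma_j\ket{\delta'_0}+\sin\gamma_j\ket{\delta'_1}$, whose norm is at most $(|\cos\gamma_j|+|\sin\gamma_j|)\cdot 2\epsilon\le 4\epsilon$. Adding the two contributions and invoking the induction hypothesis gives the claimed linear growth $4j\epsilon$.

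Next I would convert this state-distance bound into the probability bound. Since the measurement is performed only on the first register (projection onto the good subspace, i.e.\ $P_{\good}\otimes I$), and since $\ket{w_j}$ and $\ket{v_j}\ket{\psi}$ are genuine unit vectors, the difference of the two probabilities of obtaining a good outcome is at most the trace distance of the corresponding pure states, which is itself bounded by $\|\ket{w_j}-\ket{v_j}\ket{\psi}\|$. As $\ket{\psi}$ factors out of the measured register, the probability attached to $\ket{v_j}\ket{\psi}$ is exactly the ideal probability attached to $Q^{j}\mathcal{A}\ket{0^n}$; combining with the inductive bound therefore yields the first inequality. The ``in particular'' statement is then immediate: taking $j=r=\lfloor\pi/4\theta_a\rfloor$ and recalling from Proposition~\ref{prop:QAAknowinga} that measuring $Q^{r}\mathcal{A}\ket{0^n}$ returns a good $x$ with probability at least $\max(1-a,a)$, the measurement of ${Q'}^{r}\mathcal{A}\ket{0^n}\ket{\psi}$ succeeds with probability at least $\max(1-a,a)-4r\epsilon$.

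I expect the main obstacle to be the error-propagation bookkeeping in the inductive step: one must ensure that the unitarity of $Q'$ forbids the \emph{accumulated} error from being magnified (so that the per-step errors merely add, rather than compound multiplicatively), and that Lemma~\ref{lem:QisRotationMatrixError} is applied to the ideal state $\ket{v_j}\ket{\psi}$ rather than to $\ket{w_j}$, which is exactly why it matters that $\ket{v_j}$ never leaves the two-dimensional plane on which the lemma is stated. The passage from Euclidean state distance to difference of measurement probabilities is the only other delicate point regarding constants, but it is a standard fact about pure-state trace distance and introduces no further loss.
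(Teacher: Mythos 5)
Your proof is correct and follows essentially the same route as the paper: the paper also derives $\|{Q'}^j\mathcal{A}\ket{0^n}\ket{\psi}-(Q^j\mathcal{A}\ket{0^n})\ket{\psi}\|\le 4j\epsilon$ by iterating Lemma~\ref{lem:QisRotationMatrixError} (you merely make the induction and the unitarity/non-amplification bookkeeping explicit, which the paper leaves implicit), and then bounds the probability gap by this state distance. The only cosmetic difference is that you justify the last step via the pure-state trace-distance bound, where the paper asserts the inequality $|p-p'|\le\|\ket{\delta^{(j)}}\|$ directly.
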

\begin{proof}
From Lemma~\ref{lem:QisRotationMatrixError}, it follows that there exists a vector $\ket{\delta^{(j)}}$ such that $\| \ket{\delta^{(j)}} \| \leq 4j\epsilon$ and
\begin{align}
{Q'}^j\mathcal{A} \ket{0^n}\ket{\psi}
& = \left( \sin((2j+1)\theta_a) \ket{\phi_1} + \cos((2j+1)\theta_a) \ket{\phi_0} \right) \otimes \ket{\psi} + \ket{\delta^{(j)}} \nonumber \\
& = (Q^j \mathcal{A} \ket{0^n})\ket{\psi} + \ket{\delta^{(j)}} \label{eq:QAAerrorQjPrime}
\end{align}
holds.
Let us put
\begin{equation}
p' := \Pr\left[ x \gets \left(\text{\rm measure } {Q'}^j \mathcal{A}\ket{0^n}\ket{\xi} \right) : \chi(x) = 1 \right]
\end{equation}
and
\begin{equation}
p := \Pr\left[ x \gets \left(\text{\rm measure } {Q}^j \mathcal{A}\ket{0^n}\right) : \chi(x) = 1 \right].
\end{equation}
Then, since 
\begin{equation}
p = \Pr\left[ x \gets \left(\text{\rm measure } ({Q}^j \mathcal{A}\ket{0^n}) \ket{\psi} \right) : \chi(x) = 1 \right]
\end{equation}
holds, we have that
\begin{align}
\left| p - p' \right| &\leq \|({Q}^j \mathcal{A}\ket{0^n}) \ket{\psi} - {Q'}^j \mathcal{A}\ket{0^n}\ket{\psi}\| \nonumber \\
&= \| \delta^{(j)} \| \leq 4j\epsilon,
\end{align}
which completes the proof.
\qed
\end{proof}


\subsection{Finishing the Proofs}\label{sec:ErrorFin}
Here we complete the proofs for Propositions~\ref{prop:AlgQ2} and \ref{prop:AlgQ1}.
Since the number of queries required to make the state $\ket{\psi_g}$ is obviously $cn$ in the Q2 model and $2^n$ in the Q1 model, we give proofs for the statements on the offline computation (i.e., complexities and success probabilities for the procedures excluding the ones to prepare the state $\ket{\psi_g}$).
Hence it suffices to show the following lemma to prove Propositions~\ref{prop:AlgQ2} and \ref{prop:AlgQ1}.
\begin{lemma}\label{lemma:AlgQ1Q2common}
Suppose that $m$ is in $\bigO{n}$.
Let $c$ be a sufficiently large constant,\footnote{Strictly speaking, it is sufficient if $c$ satisfies $4\lfloor \pi / 4 \theta \rfloor 2^{(n+1)/2}(3/4)^{cn/2}< 1/2$, where $\theta$ is a positive value such that $0 < \theta < \pi/2$ and $\sin^2 \theta = 1/2^m$.} and let $i_0 \in \{0,1\}^m$ be the good element such that $g \oplus f_i$ is periodic.
Assume that 
\begin{equation}\label{eq:OffLemmaEpsilon}
\max_{ \substack{i \gets \{0,1\}^m \setminus \{i_0\} \\ t \in \{0,1\}^{n} \setminus \{0^{n}\}}} \Pr_{ x \gets \{0,1\}^{n}}\left[ (f_i \oplus g)(x \oplus t)= (f_i \oplus g)(x) \right] \leq \frac{1}{2}
\end{equation}
holds, and the quantum state $\ket{\psi_g}$ is given.
Then, the offline phase of $\AlgQone$ and $\AlgQtwo$ finds a good $i \in \{0,1\}^m$ with a probability in $\Theta(1)$ by making $\bigO{n2^{m/2}}$ quantum queries to $F$.
In addition, the offline computation is done in time $O((n^3+nT_F)2^{m/2})$, where is $T_F$ the time required to evaluate $F$ once.
\end{lemma}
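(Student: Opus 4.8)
The plan is to recognize the offline phase of $\AlgQone$ and $\AlgQtwo$ as a single instance of quantum amplitude amplification (QAA) in which the checking procedure is carried out only approximately, so that Proposition~\ref{prop:QAAuncertainKnowinga} applies directly once the per-iteration error of {\sf test} has been controlled through Lemma~\ref{lem:SimonErrorBound}. Concretely, I would take $\mathcal{A}$ to be the operator producing the uniform superposition $\sum_{i \in \zo^m}\ket{i}$, take $\chi(i)=1$ exactly when $f_i \oplus g$ is periodic (so that $\chi(i)=1 \iff i=i_0$, by the hypothesis that $i_0$ is the unique good element), and identify the auxiliary register $\ket{\psi}$ of the proposition with the g-database $\ket{\psi_g}$. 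Since there is exactly one good index among the $2^m$ candidates, the initial success amplitude is $a=2^{-m}$, i.e. $\sin^2\theta_a = 2^{-m}$, which matches the angle $\theta$ appearing in the footnote; hence $r:=\lfloor \pi/(4\theta_a)\rfloor = \bigO{2^{m/2}}$.

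The second step is to certify that {\sf test} plays the role of the uncertain checking operator $\widehat{\mathcal{S}'_\chi}$ with a suitably small error $\epsilon$. By construction of {\sf test} (Algorithm~\ref{algorithm:test}), whenever $f_i \oplus g$ is periodic the flag $r=1$ is produced deterministically, so on the good subspace {\sf test} flips the output bit exactly, giving error $0$. On the bad subspace, Lemma~\ref{lem:SimonErrorBound}, instantiated with the collision bound $1/2$ furnished by hypothesis~\eqref{eq:OffLemmaEpsilon}, bounds the error of {\sf test} on \emph{any} superposition over $F_0$ by $\|\ket{\delta_b}\| < 2^{(n+1)/2}(3/4)^{cn/2}$. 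I would therefore set $\epsilon := 2^{(n+1)/2}(3/4)^{cn/2}$ as the uniform error bound required by Proposition~\ref{prop:QAAuncertainKnowinga}.

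Plugging these into Proposition~\ref{prop:QAAuncertainKnowinga} shows that measuring ${Q'}^r\mathcal{A}\ket{0^m}\ket{\psi_g}$ yields a good index with probability at least $\max(1-a,a) - 4r\epsilon = (1-2^{-m}) - 4\lfloor \pi/(4\theta_a)\rfloor 2^{(n+1)/2}(3/4)^{cn/2}$. The choice of $c$ in the hypothesis is precisely what forces the subtracted term $4r\epsilon < 1/2$, so the success probability is at least $1/2 - 2^{-m} = \Theta(1)$. For the resource counts, each invocation of {\sf test} uses $2cn = \bigO{n}$ quantum queries to $F$ (as $m=\bigO{n}$ makes $c=\bigO{1}$), so the $r=\bigO{2^{m/2}}$ iterations make $\bigO{n2^{m/2}}$ queries in total; the per-iteration offline time is $\bigO{nT_F}$ for the queries, $\bigO{n^2}$ for the Hadamards, $\bigO{n^3}$ for the Gaussian elimination computing $\dim(\mathrm{Span})$, and $\bigO{n}$ for the reflection $\mathcal{A}\mathcal{S}_0\mathcal{A}^{-1}$, for a total of $\bigO{(n^3+nT_F)2^{m/2}}$.

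The main obstacle, and the reason the two preceding subsections were needed, is not any single estimate but the bookkeeping that lets the two lemmas compose. One must check that the data-dependent, possibly $\ket{\psi_g}$-entangling errors of {\sf test} are captured by a single operator-level bound $\epsilon$ valid on the whole bad subspace (this is exactly the content of Lemma~\ref{lem:SimonErrorBound}, whose step $\|\ket{\delta_b}\|^2 \leq 2\max_i\|\ket{\eta^{(i)}_{\bad}}\|^2$ removes the dependence on the particular amplitudes), and that these per-step errors accumulate only linearly in the number of iterations rather than compounding (the content of Lemma~\ref{lem:QisRotationMatrixError} and its use in Proposition~\ref{prop:QAAuncertainKnowinga}). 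Once both facts are in hand, the proof is the routine substitution above.
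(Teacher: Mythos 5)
Your proposal is correct and follows essentially the same route as the paper's own proof: identify the offline phase with the amplitude-amplification-with-errors framework of Proposition~\ref{prop:QAAuncertainKnowinga}, instantiate $\mathcal{A}=H^{\otimes m}$, $a=2^{-m}$, $\widehat{\mathcal{S}'_\chi}={\sf test}$, and $\epsilon = 2^{(n+1)/2}(3/4)^{cn/2}$ via Lemma~\ref{lem:SimonErrorBound} under hypothesis~\eqref{eq:OffLemmaEpsilon}, then count $\bigO{n}$ queries and $\bigO{n^3+nT_F}$ time per iteration over $\bigO{2^{m/2}}$ iterations. Your accounting of why the errors accumulate only linearly matches the paper's use of Lemma~\ref{lem:QisRotationMatrixError}, so no gap remains.
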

\begin{proof}
Let $\ket{\phi_1} := \ket{i_0}\ket{\psi_{f_{i_0} \oplus g}}$ and $\ket{\phi_0} := \sum_{i \in \{0,1\}^m \setminus \{i_0\} } \frac{1}{\sqrt{2^m-1}}\ket{i} \ket{\psi_{f_i\oplus g}}$.
Then, by the condition \eqref{eq:OffLemmaEpsilon} and Lemma~\ref{lem:SimonErrorBound}, we have that
\begin{align}
{\sf test} \ket{\phi_1} &= \ket{\phi_1}, \text{ and } \nonumber \\
{\sf test} \ket{\phi_0} &= \ket{\phi_0} + \ket{\delta},
\end{align}
where $\ket{\delta}$ is a vector such that $\| \delta \| < 2^{(n+1)/2} (3/4)^{cn/2}$.

Now, the offline phases of $\AlgQone$ and $\AlgQtwo$ correspond to the (modified) quantum amplitude amplification of Proposition~\ref{prop:QAAuncertainKnowinga} with $\mathcal{A} = H^{\otimes m}$, $\widehat{\mathcal{S}'_\chi} = {\sf test}$, $\epsilon = 2^{(n+1)/2} (3/4)^{cn/2}$, and $a = 1/2^m$.
Therefore, by applying the Grover's iteration (i.e., $Q'$) $\bigO{2^{m/2}}$ times, we can obtain the good index $i_0$
with a probability at least $1 - 2^{-m/2} - O(2^{m/2} 2^{(n+1)/2} (3/4)^{cn/2})$.
Since we are assuming that $c$ is a sufficiently large constant and $m = \bigO{n}$, $\bigO{2^{m/2} 2^{(n+1)/2} (3/4)^{cn/2}) } o(1)$ holds.
Thus we can obtain the good index $i_0$ with a probability in $\Theta(1)$.

Since each Grover's iteration (i.e., $Q'$) makes $\bigO{cn}=\bigO{n}$ queries to $F$, the total number of quantum queries to $F$ is $\bigO{n 2^{m/2}}$.
In addition, each Grover's iteration runs in time $\bigO{n^3 + nT_F}$ since computing the dimension of the space spanned by $cn$ can be done in time $\bigO{n^3}$.
Thus the offline phase runs in time $\bigO{(n^3 + nT_F)2^{m/2}}$.
\qed
\end{proof}

\section{Adaptive Attacks and Non-Adaptive Attacks}\label{sec:adaptive}
Our attack on the FX construction in the Q2 model is \emph{non-adaptive} since the online queries required to make the state $\ket{\psi_g}$ is just the uniform superposition of plaintexts.
On the other hand, the previous Q2 attack on the construction by Leander and May is \emph{adaptive} since quantum queries made to keyed online oracles are changed depending on the results for previous quantum queries.
For Q1 attacks, both of existing quantum attacks and our attacks are non-adaptive.
}
 
\end{document}